\documentclass[a4paper, 10pt]{article}

\title{Strategy-proof voting schemes under single-peaked preferences over intervals
}
\author{Hao Yu\thanks{Adam Smith Business School, University of Glasgow, United Kingdom. Email: \href{mailto:hao.yu@glasgow.ac.uk}{hao.yu@glasgow.ac.uk}}
}

\date{\today}

\usepackage[margin=1.4in]{geometry}
\selectfont
\usepackage{amsmath, amsfonts, amsthm, amssymb, graphicx, enumitem}
\usepackage[colorlinks=true, citecolor=blue]{hyperref}
\usepackage[round]{natbib}

\usepackage{subcaption}
\usepackage{tikz}
\usetikzlibrary{patterns.meta, fit}

\usepackage{newpxtext, newpxmath}

\newtheorem{theorem}{Theorem}
\newtheorem{proposition}{Proposition}

\newtheorem{lemma}{Lemma}

\theoremstyle{definition}

\newtheorem{inneraxiom}{Axiom}
\newtheorem*{inneraxiomnonumber}{\axiomname}
\providecommand{\axiomname}{}

\makeatletter
\NewDocumentEnvironment{axiom}{o}
{\IfNoValueTF{#1}
    {\inneraxiom}
    {
    \renewcommand{\axiomname}{#1}
    \def\@currentlabel{#1}
    \inneraxiomnonumber
    }%
}
{\IfNoValueTF{#1}
    {\endinneraxiom}{\endinneraxiomnonumber}
}
\makeatother

\newtheorem{example}{Example}
\AtBeginEnvironment{example}{%
  \pushQED{\qed}%
}
\AtEndEnvironment{example}{\popQED}

\newtheoremstyle{no-bracket}
    {\topsep}
    {\topsep}
    {\normalfont\itshape}
    {0pt}
    {\bfseries}
    {} 
    { }
    {\thmname{#1}~\thmnumber{#2}\thmnote{#3}.}
\theoremstyle{no-bracket}  
\newtheorem*{theorem*}{Theorem}

\usepackage{lipsum}

\newcommand{\med}{\mathop{\mathrm{med}}}
\newcommand{\rank}{\mathop{\mathrm{rank}}}
\newcommand{\calI}{\mathcal{I}}
\newcommand{\medI}{\mathop{M}}
\newcommand{\leftI}{\mathop{L}}
\newcommand{\rightI}{\mathop{R}}
\newcommand{\rankI}{\mathop{K}}

\begin{document}

\maketitle
\begin{abstract}
    When a subset of alternatives is allowed to be the collective choice, one approach is to study the social choice correspondence defined on agents' preferences over alternatives. This paper considers a different approach. We argue that agents should have preferences directly over outcomes. We consider the social choice problem on a closed real interval, and assume a single-peaked domain over intervals. Our focus is on the peak-only social choice function, referred to as the voting scheme. The leftright scheme, which generalizes the minmax scheme, is the only strategy-proof voting scheme. Adding anonymity, it reduces to the generalized median scheme. For an odd number of agents, the median scheme is the only voting scheme satisfying efficiency, anonymity, and strategy-proofness.
    \\
    
    \noindent\textbf{Keywords:} Social Choice Function; Peak-onlyness; Single-peakedness; Interval; Strategy-proofness; The median scheme.
    \\[0.2cm]
    \noindent\textbf{JEL Classification:} D71 
\end{abstract}

\newpage
\section{Introduction}

This paper considers the social choice problem of choosing a subinterval from within an interval. In many scenarios, people are required to reach a collective decision by aggregating different opinions over intervals. For example, experts need to combine their forecasts, which could be confidence intervals derived from rigorous statistical methods or intervals based on subjective evaluations \citep{clemen1989}. 
In the context of legal systems, minimum or maximum limits of tolerance are, in many cases, determined by community standards \citep{miller2025}. Another interpretation is the political spectrum, and each interval outcome can be viewed as a connected coalition \citep{block2010}. 

We use the collective choice of an event time as a running example. The event could be a festival celebration or an academic conference, and the event time specifies when it starts and ends. Splitting the event time is not allowed. The organizer asks participants about their preferences over the event times and makes a collective decision based on their messages. The organizer hopes that participants will report truthfully so that the collective decision can reflect the consensus to some extent. 

That leads to our special focus on strategy-proofness.\footnote{For surveys, see \cite{sprumont1995} and \cite{barbera2011}.} The Gibbard-Satterthwaite theorem indicates an impossibility, that is, the only strategy-proof social choice function is dictatorial \citep{Gibbard1973, Satterthwaite1975}. A similar impossibility result still holds when the subset of alternatives is allowed to be the outcome \citep*{barbera2001}. 

To achieve a possibility result, a common approach is to consider a domain restriction, where the single-peaked domain has been extensively successful. In a one-dimensional model, \cite{moulin1980} characterizes the class of strategy-proof voting schemes, later called minmax schemes, and the generalized median scheme. For a social choice among $n$ agents, the generalized median scheme selects the median opinion among $n$ agents and $n+1$ fixed ballots; and it is efficient if the number of fixed ballots is restricted to $n-1$. 
This result is then extended to multidimensional models with appropriate generalizations of the traditional single-peaked domain \citep*{border1983, barbera1993}. 

In terms of the interval model, a branch of the literature, e.g., \cite{klaus2020}, considers an approach of extension. They assume that each agent has a traditional single-peaked preference over alternatives, and infers a best-worst extension of preferences over intervals. Viewing the result of social choice correspondence as the outcome under ignorance, this preference extension achieves a balance between optimistic and pessimistic opinions.

We argue that if the outcome is allowed to be a subset of alternatives, agents should have direct preferences over them \citep*{barbera1991}. Especially, in our running example, we can say that the agent prefers the event running from 8:00 on day 1 to 17:00 on day 3; but can hardly say that each agent prefers only one specific time, e.g., at 13:00 on October the 1st. A similar extension doesn't work in our situation.

We believe that single-peaked preferences in the following sense are an appropriate setting \citep{block2010,farfel2011}. Each agent has a favorite event time, e.g., from 10:00 on day 1 to 17:00 on day 3. Consider another event time, e.g., from 17:00 on day 1 to 8:00 on day 3. Every event time, e.g., from 12:00 on day 1 to 12:00 on day 3, of which the start and end times are between those of the two schedules, respectively, is preferred to the fixed one. We refer to it as the single-peaked preference over intervals. 
Note that we can always represent each feasible outcome, i.e., an interval containing leftmost and rightmost alternatives, as a point in two-dimensional space. This is, in fact, a two-dimensional generalized single-peaked preference proposed by \cite*{barbera1993}.

We restrict our attention to the peak-only social choice functions, referred to as voting schemes. That is, the social choice depends only on the profile of peak intervals. There are two reasons for this restriction. First, reporting the peak interval is more practical than reporting full preference. Suppose there are $m$ alternatives in total. Reporting a (strict) ordering of all outcomes would require a list of $m^2/2$ intervals, which is more demanding than that of selecting a single alternative. Second, dropping peak-onlyness would not provide more insight into the form of the social choice function. All schemes characterized in this paper only need the information of peak intervals. We will further discuss the peak-onlyness at the end. 

In a multidimensional setting, \cite*{barbera1993} show that every onto social choice function satisfying strategy-proofness is a generalized median voter scheme, which generalizes \cite{moulin1980}'s schemes, on the generalized single-peaked domain. This scheme is induced and represented by a left (right) coalition system. This result cannot be applied directly to our interval model. This is not because their model is discrete, rather, we face an additional feasibility constraint, i.e., the leftmost alternative should not be to the right of the rightmost alternative. \cite*{barbera1997} consider the case where the alternative set may be non-Cartesian. They show that strategy-proof social choice functions are similar, though the forms are restricted. 

We characterize the class of voting schemes satisfying strategy-proofness, which we call the leftright scheme. This scheme has the same structure as the one-dimensional minmax scheme characterized by \cite{moulin1980}. Importantly, it is not merely a combination of two minmax schemes for leftmost and rightmost alternatives, respectively; the fixed ballots for all coalitions are also intervals, so that the feasibility constraints are automatically satisfied. It provides an alternative representation of the generalized median voter scheme. In addition to the leftright scheme, we characterize the interval version of the generalized median scheme and its restrictions under anonymity, unanimity, and ontoness. 

We also consider efficiency. Establishing the possibility result of strategy-proofness and efficiency is difficult in a multidimensional setting, especially in settings with more than two dimensions \citep{barbera1991,kim1984,nehring2007b,peters1992}. In particular, \cite{nehring2007b} study strategy-proof and efficient voting schemes on a median space, and show that for more than two dimensions, it is almost dictatorial. 

Identifying the efficient and strategy-proof social choice functions in our interval model, which has two dimensions, is therefore meaningful. 

In a discrete model, the single-peaked domain considered in our paper forms a median space. \cite{nehring2007} show that there exists an anonymous, strategy-proof, and weakly neutral social choice function. In a discrete interval model, \cite{block2010} finds that the median scheme, of which the leftmost and rightmost alternatives are the medians of leftmost and rightmost alternatives among agents' peak intervals, respectively, is an example. 
Using the result of \cite{nehring2007b}, \cite{block2010} shows that it is also efficient, and thus provides a possibility result.

Different from \cite{block2010}, the intervals we consider are closed subsets of $\mathbb{R}$. Based on this setting, we provide a novel analysis and show that the median scheme is the only efficient, anonymous, and strategy-proof voting scheme. However, our approach does not apply to the discrete model. We provide a discussion on it at the end.

In a similar setting, \cite{miller2025} provides several characterizations of the endpoint scheme, which is a subclass of the generalized median scheme, by using a weak version of neutrality. It selects an interval of which the leftmost and rightmost alternatives are from those of agents' peak intervals, respectively. This work is also related to ours. Indeed, the rank-based scheme and the median scheme discussed in our paper belong to the endpoint scheme. 

The rest of the paper is organized as follows: Section~\ref{sec:ind2} describes the model and preliminaries; and introduces the voting schemes. Section~\ref{sec:ind3} studies strategy-proofness and provides its characterization. Section~\ref{sec:ind4} gives further analysis, including the characterization results regarding to anonymity, unanimity, ontoness, and efficiency. All the voting schemes characterized in this section have a similar median structure. Section~\ref{sec:ind5} concludes and discusses the paper.

\section{The model}\label{sec:ind2}
Let $A$ denote the set of alternatives, a closed interval on $\mathbb{R}$.
For non-triviality, we assume that it is not a singleton. We use $[\cdot]^-$ and $[\cdot]^+$ to denote the leftmost and rightmost alternatives of the corresponding interval, respectively. For example, $A^-$ and $A^+$ denote the leftmost and the rightmost alternatives of $A$, respectively; and $A\equiv[A^-, A^+]$. 
Denote by $\calI = \{[l,r]\mid l,r\in A, l\leq r\}$ the set of all feasible (closed) intervals generated by the alternative set $A$.

For each pair of intervals $X, Y\in\calI$, we define the betweenness set as 
\begin{equation*}
    B(X,Y) 
    = 
    \{
    V\in\calI
    \mid 
    \min\{X^-, Y^-\}\leq V^- \leq \max\{X^-, Y^-\}, 
    \min\{X^+, Y^+\}\leq V^+ \leq \max\{X^+, Y^+\} 
    \}.
\end{equation*}
That is, we say $V$ is between $X$ and $Y$ if its leftmost and rightmost alternatives are between the leftmost and rightmost alternatives of $X$ and $Y$, respectively. 

We also define the order relation between two intervals. For each pair $X, Y\in\calI$, we abuse the notation and write $X\leq Y$ if $X^-\leq Y^-$ and $X^+\leq Y^+$. Note that each interval $V$ can be written as $[V^-, V^+]$. $X\leq Y$ is exactly the same as the order relation between the two-dimensional coordinates $(X^-, X^+)$ and $(Y^-, Y^+)$.

Let $N= \{1, \cdots, n\}$ denote the set of $n$ agents. Denote by $\succsim_i$ the rational, i.e., complete and transitive, preference relation of agent $i\in N$ over intervals $\calI$. For each pair $X, Y\in\calI$, we say that agent $i$ strictly prefers $X$ to $Y$, written as $X\succ_i Y$, if $X\succsim_i Y$ but not $Y\succsim_i X$. 

We assume agents' preferences over intervals $\calI$ are single-peaked. That is, (a) for all $i\in N$, there is a unique peak interval $X_i^*$ such that for all $Y\in \calI:Y\neq X_i^*$, we have $X_i^*\succ_i Y$; and (b) for all $Y,Z\in\calI$ such that $Z\in B(X_i^*, Y)$ and $Z\neq Y$, we have $Z\succ_i Y$.
Denote by $\mathcal{P}$ the set of all preferences satisfying the single-peakedness properties shown above. We refer to it as the single-peaked domain over intervals.

This generalization of traditional single-peaked preferences is proposed by \cite*{barbera1993}, and introduced in the interval model by \cite{block2010}, \cite{farfel2011}. Figure~\ref{fig:ind5} illustrates this preference in both interval and two-dimensional representations. This single-peaked domain includes many common preference classes such as additively separable preferences and distance-based preferences. 

\begin{example}
    An $L_p$ (dis)utility, $1\leq p<\infty$, belongs to the single-peaked domain.

    Suppose $X\in\calI$ is a peak. The $L_p$ utility function  $u:\calI\to\mathbb{R}$ with peak $X$ is defined as
\begin{equation*}
    u^{(p)}(V)
    =
    - |V^- - X^-|^p + |V^+ - X^+|^p.
\end{equation*}
    In particular, we write $u^{(2)}(V) = -||V-X||^2$, where $||V-X|| \equiv \sqrt{|V^- - X^-|^2 + |V^+ - X^+|^2}$.
\end{example}

\begin{figure}
    \centering
\begin{subfigure}[t]{0.45\textwidth}
\centering
\begin{tikzpicture}[scale=1] 
    \draw[->, dashed] (-2.2,0) -- (2.2,0) node[right] {$l$}; 
    \draw[->, dashed] (0,-2.2) -- (0,2.2) node[above] {$r$}; 

    \draw[-] (-2,-2) -- (2,2) -- (-2, 2) -- cycle;

    \fill (-1.5,0.2) circle (2pt); \node[left] at (-1.5,0.2) {$X_i^*$}; 
    \fill (-0.8,0.8) circle (2pt); \node[right] at (-0.8,0.8) {$Z$}; 
    \fill (-0.2,1.5) circle (2pt); \node[above] at (-0.2,1.5) {$Y$}; 

    \fill (-2, 2) circle (2pt); \node[above left] at (-2, 2) {$A$}; 

    \coordinate (X) at (-1.5,0.2); \coordinate (Y) at (-0.2,1.5);
    \node[ draw, dashed,  inner sep=0pt, fit=(X)(Y) ] {};
\end{tikzpicture}

\begin{tikzpicture}[scale=1] 
    \draw[dashed] (-2.2,0) -- (2.2,0);
    \fill (-2,0) circle (2pt) node[below] {$A^-$}; \fill (2,0) circle (2pt) node[below] {$A^+$};
    
    \draw[-] (-0.8, 0)--(0.8, 0) node[above right] {$Z$};
    \fill (-0.8,0) circle (2pt); \fill (0.8,0) circle (2pt); 

    \draw[-] (0.2, 0.5)--(-1.5, 0.5) node[left] {$X_i^*$};
    \fill (0.2, 0.5) circle (2pt); \fill (-1.5, 0.5) circle (2pt); 

    \draw[-] (1.5, -0.5) -- (-0.2, -0.5) node[left] {$Y$};
    \fill (-0.2, -0.5) circle (2pt); \fill (1.5, -0.5) circle (2pt); 
\end{tikzpicture}
\subcaption{}
\end{subfigure}
\begin{subfigure}[t]{0.45\textwidth}
\centering
\begin{tikzpicture}[scale=1] 
    \draw[->, dashed] (-2.2,0) -- (2.2,0) node[right] {$l$}; 
    \draw[->, dashed] (0,-2.2) -- (0,2.2) node[above] {$r$}; 

    \draw[-] (-2,-2) -- (2,2) -- (-2, 2) -- cycle;

    \fill (-1.5,1.5) circle (2pt); \node[left] at (-1.5,1.5) {$X_i^*$}; 
    \fill (-1,1) circle (2pt); \node[right] at (-1,1) {$Z$}; 
    \fill (-0.5,0.5) circle (2pt); \node[right] at (-0.5,0.5) {$Y$}; 

    \fill (-2, 2) circle (2pt); \node[above left] at (-2, 2) {$A$}; 

    \coordinate (X) at (-1.5,1.5); \coordinate (Y) at (-0.5,0.5);
    \node[ draw, dashed,  inner sep=0pt, fit=(X)(Y) ] {};
\end{tikzpicture}

\begin{tikzpicture}[scale=1] 
    \draw[dashed] (-2.2,0) -- (2.2,0);
    \fill (-2,0) circle (2pt) node[below] {$A^-$}; \fill (2,0) circle (2pt) node[below] {$A^+$};
    
    \draw[-] (-1, 0)--(1, 0) node[below right] {$Z$};
    \fill (-1,0) circle (2pt); \fill (1,0) circle (2pt); 

    \draw[-] (1.5, 0.5)--(-1.5, 0.5) node[left] {$X_i^*$};
    \fill (1.5, 0.5) circle (2pt); \fill (-1.5, 0.5) circle (2pt); 

    \draw[-] (0.5, -0.5) -- (-0.5, -0.5) node[left] {$Y$};
    \fill (-0.5, -0.5) circle (2pt); \fill (0.5, -0.5) circle (2pt); 
\end{tikzpicture}
\subcaption{}
\end{subfigure}
    \caption{Two examples of $Z\in B(X_i^*,Y)$ with interval and two-dimensional representations. }
    \label{fig:ind5}
\end{figure}

On the domain of single-peaked preference over intervals, we have the following richness condition.

\begin{axiom}[Richness condition]
    For all $X, Y, Z\in\calI$ such that $Z\notin B(X, Y)$, there exists a preference $\succsim\ \in\mathcal{P}$ with peak $X$ such that $Y\succ Z$.
\end{axiom}

Note that the single-peaked domain defined above contains those satisfying continuity. We can always find some continuous preferences represented by utility functions to establish this richness condition. 

\begin{example}
    Suppose $Y \leq X$ and $Z\notin B(X,Y)$. Note that the condition $Z\notin B(X,Y)$ implies that $Z\neq X$ and $Z\neq Y$. For each nontrivial case, we construct a utility function $u:\calI\to\mathbb{R}$ with peak $X$ such that $u(Y)>u(Z)$.

    Case 1: $Z\leq Y$. Immediately, we obtain $Y\in B(X, Z)$. By single-peakedness, we have $Y\succ Z$. 
    
    Case 2(a): $Y^-<Z^-\leq X^-$ and $Z^+< Y^+$. Let
\begin{equation*}
    u(V) = -\lambda (Y^+ - Z^+) |X^- - V^-| - (Z^- - Y^-)|X^+ - V^+|,
\end{equation*}
    for some $0<\lambda<1$. Then we have
\begin{equation*}
\begin{split}
    u(Y) - u(Z)
    &=
    -\lambda (Y^+ - Z^+)(|X^- - Y^-| - |X^- - Z^-|) - (Z^- - Y^-)(|X^+ - Y^+|- |X^+ - Z^+|) \\
    &=
    -\lambda (Y^+ - Z^+)(Z^- - Y^-) - (Z^- - Y^-)(Z^+ - Y^+) \\
    &= 
    (1-\lambda) (Y^+ - Z^+)(Z^- - Y^-)
    > 0.
\end{split}
\end{equation*}
    
    Case 2(b): $Z^-< Y^-$ and $Y^+<Z^+\leq X^+$. A symmetric argument establishes $Y\succ Z$ with the following utility function. 
\begin{equation*}
    u(V) = - \lambda (Y^- - Z^-) |X^+ - V^+| - (Z^+ - Y^+)|X^- - V^-|,
\end{equation*}
    for some $0<\lambda<1$. 

    Case 3: $Z^- > X^-$ or $Z^+ > X^+$. Let 
\begin{equation*}
    u(V)
    =
    \left\{
        \begin{array}{ll}
            -\dfrac{||X-Z||}{||X-Y||+\varepsilon}||X - V|| & \text{if $V\leq X$}; \\
            -||X - V||, & \text{otherwise},
        \end{array}
    \right.
\end{equation*}
    for some $\varepsilon > 0$. Then 
\begin{equation*}
    u(Y) - u(Z)
    =
    ||X - Z|| - \frac{||X-Z||}{||X-Y||+\varepsilon}||X - Y||
    =
    ||X - Z|| \left(1 - \frac{||X-Y||}{||X-Y||+\varepsilon}\right)
    =
    \frac{\varepsilon ||X - Z||}{||X-Y||+\varepsilon}
    >0,
\end{equation*}
    as desired.
\end{example}

\subsection{Operators and lemmata}

To assist our analysis, we define some operators for alternatives (scalars) and intervals.

For scalars, consider a vector $v = (v_1, \cdots, v_n)\in A^n$ such that $v_1\leq \cdots \leq v_n$. We denote by $\rank_\gamma(\cdot)$ the rank-based operator with parameter $\gamma$. It selects the $\gamma$-th smallest scalar $v_\gamma$ among $v$.
Fix an odd $n$. The median operator, denoted by $\med(\cdot)$, selects the median among $v_1, \cdots, v_n$. That is, $|\{v_j\in v: v_j\leq \med(v)\}|\geq (n+1)/2$ and $|\{v_j\in v: v_j\geq \med(v)\}|\geq (n+1)/2$.
We have the following useful lemma for a specific recursive form.

\begin{lemma}\label{lem:ind4}
    Fix $n$. For all $x_0, \cdots, x_n\in A$, and a non-decreasing sequence $v_1, \cdots, v_{n+2}$, there is 
\begin{equation}
    \med(x_1, \cdots, x_n, \med(x_0, v_1, v_2), \cdots, \med(x_0, v_{n+1}, v_{n+2})) 
    = 
    \med(x_0, x_1, \cdots, x_n, v_1, \cdots, v_{n+2}). \label{eq:ind108}
\end{equation}
\end{lemma}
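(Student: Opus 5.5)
We prove \eqref{eq:ind108} by comparing the two medians through threshold counts. Both sides are medians of an odd number of reals. The left side has $n$ entries $x_1,\dots,x_n$ plus the $n+1$ entries $m_k \equiv \med(x_0, v_k, v_{k+1})$ for $k=1,\dots,n+1$, so $2n+1$ in total. The right side has $2n+3$ entries. For a multiset $S$ of odd size $2s+1$ and any $t\in\mathbb{R}$, we have $\med(S)\le t$ if and only if $|\{y\in S: y\le t\}|\ge s+1$. Since a real number is determined by the set of $t$ with value $\le t$, it suffices to show that for every $t$ the two counting conditions are equivalent: the left condition is $|\{\,\cdot\le t\}|\ge n+1$ and the right condition is $|\{\,\cdot\le t\}|\ge n+2$.

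Fix $t$ and introduce three quantities:
\begin{itemize}
\item $a=|\{i\ge 1: x_i\le t\}|$, which lies in $\{0,\dots,n\}$;
\item $e=1$ if $x_0\le t$ and $e=0$ otherwise;
\item $c=|\{j: v_j\le t\}|$.
\end{itemize}
Because $v$ is non-decreasing, exactly $v_1,\dots,v_c$ are $\le t$. The right-hand condition then reads $a+e+c\ge n+2$.

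For the left-hand side we count the $m_k$ that are $\le t$. A median of three numbers is $\le t$ if and only if at least two of them are $\le t$. Hence $m_k\le t$ holds if and only if either $k+1\le c$ (both $v_k$ and $v_{k+1}$ are $\le t$), or $k=c$ and $e=1$. This gives the number of $m_k\le t$ in each range of $c$:
\begin{itemize}
\item if $c=0$, the count is $0$;
\item if $1\le c\le n+1$, the count is $(c-1)+e$;
\item if $c=n+2$, the count is $n+1$.
\end{itemize}
The left-hand condition is therefore $a+\#\{k:m_k\le t\}\ge n+1$.

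We now check the equivalence case by case.
\begin{itemize}
\item When $c=0$, the left condition requires $a\ge n+1$ and the right requires $a+e\ge n+2$. Both are false, since $a\le n$ and $e\le 1$.
\item When $c=n+2$, both conditions are true.
\item When $1\le c\le n+1$, the left condition is $a+c-1+e\ge n+1$, which is exactly $a+e+c\ge n+2$.
\end{itemize}
So the two medians have the same lower level sets and hence coincide.

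The only delicate step is the count of the $m_k\le t$, in particular the boundary index $k=c$ where $x_0$ acts as the tie-breaker. That is where the overlapping-pair structure $(v_k,v_{k+1})$ and the monotonicity of $v$ are used. I would verify the extreme cases $c=0$ and $c=n+2$ separately, since the generic formula $(c-1)+e$ fails there.
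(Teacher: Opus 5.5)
Your proof is correct, and it takes a different route from the paper.

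The paper cases on where $x_0$ sits among the ordered $v$'s ($x_0\le v_1$, $x_0\ge v_{n+2}$, or $v_k\le x_0\le v_{k+1}$). In each case it computes every inner median $\med(x_0,v_k,v_{k+1})$ explicitly, so the left side becomes a median of $2n+1$ concrete numbers (e.g.\ $\med(x_1,\dots,x_n,v_1,\dots,v_{n+1})$ when $x_0\le v_1$). It then uses a majority argument to show that this median lies between the two extra entries on the right side, so adding them leaves the median unchanged.

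You instead fix a threshold $t$ and case on the position of $t$ (rather than $x_0$) among the $v$'s. Both sides then reduce to the criterion that a median of $2s+1$ numbers is $\le t$ iff at least $s+1$ of them are.

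Your version gains uniformity. There is no ``add one entry on each side'' step, and ties need no separate care. Monotonicity of $v$ and the overlapping pairs $(v_k,v_{k+1})$ enter only at the single boundary index $k=c$, where $x_0$ acts as tie-breaker. The paper's version gains explicit closed forms for the left-hand side in each regime of $x_0$. These show how $x_0$ ``slides'' through the fixed ballots, mirroring step 4 of Moulin's argument, from which the lemma is taken.
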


\begin{proof}
    Note that this lemma is exactly a wrap-up of step 4 of the proof of \cite{moulin1980}'s Proposition 2.

    Case 1: $x_0 \leq v_1$. There is $\med(x_0, v_k, v_{k+1}) = v_k$ for all $k = 1, \cdots, n+1$. Thus the left-hand side (LHS) of (\ref{eq:ind108}) becomes $\med(x_1, \cdots, x_n, v_1, \cdots, v_{n+1})$ denoted by $\theta$. Note that $v_1, \cdots, v_{n+1}$ is a majority ($n+1$ terms). It implies that $v_1\leq \theta\leq v_{n+1}$. Thus $\theta$ is still a median if we add $x_0\leq v_1$ and $v_{n+2}\geq v_{n+1}$. That is, $\med(x_0, \cdots, x_n, v_1, \cdots, v_{n+2}) = \theta$.

    Case 2: $x_0 \geq v_{n+2}$. Then $\med(x_0, v_k, v_{k+1}) = v_{k+1}$ for all $k = 1, \cdots, n+1$. It implies that the LHS of (\ref{eq:ind108}) equals $\med(x_1, \cdots, x_n, v_2, \cdots, v_{n+2})$ denoted by $\theta'$. Since $v_2, \cdots, v_{n+2}$ is the majority, we have $v_2\leq \theta'\leq v_{n+2}$. It implies that $\theta'$ is still a median if we add $v_1\leq v_2$ and $x_0\geq v_{n+2}$. It results in $\med(x_0, \cdots, x_n, v_1, \cdots, v_{n+2}) = \theta'$.

    Case 3: $v_k\leq x_0\leq v_{k+1}$ for some $k = 1, \cdots, n+1$. It implies that $\med(x_0, v_{k'}, v_{k'+1}) = v_{k'+1}$ for all $k' = 1, \cdots, k-1$; $\med(x_0, v_k, v_{k+1}) = x_0$; and $\med(x_0, v_{k''}, v_{k''+1}) = v_{k''}$ for all $k'' = k+1, \cdots, n+1$. Then the LHS of (\ref{eq:ind108}) becomes $\med(x_1, \cdots, x_n, v_2, \cdots, v_k, x_0, v_{k+1}, \cdots, v_{n+1})$ denoted by $\theta''$. Since $v_2, \cdots, v_k, x_0, v_{k+1}, \cdots, v_{n+1}$ is the majority, we know $v_1\leq \min\{x_0,v_2\}\leq \theta''\leq \max\{x_0,v_{n+1}\} \leq v_{n+2}$. Then adding $v_1$ and $v_{n+2}$ will not change the median. It gives $\med(x_0, \cdots, x_n, v_1, \cdots, v_{n+2}) = \theta''$, as desired.
\end{proof}

For intervals, consider a vector of intervals $V=(V_1, \cdots, V_n) \in\calI^n$. We can separately arrange their leftmost and rightmost alternatives as $l_1\leq \cdots \leq l_n$ and $r_1\leq \cdots\leq r_n$. 
The rank-based interval operator with parameters $\gamma$, denoted by $\rankI_\gamma(\cdot)$, selects an interval with the $\gamma$-th smallest leftmost and rightmost alternatives. That is, $\rankI_\gamma(V) = [l_\gamma, r_\gamma]$. 


Let $\leftI(\cdot)$ and $\rightI(\cdot)$ denote the leftest and rightest interval operators, respectively. The leftest (rightest) interval operator chooses the leftest (rightest) possible interval, that is, $\leftI(V) = [\min_i V_i^-, \min_i V_i^+] \equiv [l_1, r_1]$ and $\rightI(V) = [\max_i V_i^-, \max_i V_i^+] \equiv [l_n, r_n]$. The leftest and rightest interval operators satisfy the distributivity law.

\begin{lemma}\label{lem:ind6}
    Leftest and rightest interval operators satisfy the following distributivity laws: For all $X,Y,Z\in\calI$, 
\begin{enumerate}[label=(\alph*)]
    \item $\rightI\{X, \leftI\{Y,Z\}\} = \leftI\{\rightI\{X,Y\}, \rightI\{X,Z\}\}$;
    \item $\leftI\{X, \rightI\{Y,Z\}\} = \rightI\{\leftI\{X,Y\}, \leftI\{X,Z\}\}$.
\end{enumerate}
\end{lemma}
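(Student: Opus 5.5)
The plan is to reduce both identities to the distributive laws of $\max$ and $\min$ on the totally ordered set $A\subseteq\mathbb{R}$. By definition, $\leftI$ and $\rightI$ act coordinatewise. For two intervals,
\begin{equation*}
\leftI\{X,Y\}=[\min\{X^-,Y^-\},\min\{X^+,Y^+\}], \qquad \rightI\{X,Y\}=[\max\{X^-,Y^-\},\max\{X^+,Y^+\}].
\end{equation*}

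Before anything else, I will check that each application of the operators yields a feasible interval in $\calI$. This holds because $X^-\le X^+$ and $Y^-\le Y^+$ imply $\min\{X^-,Y^-\}\le\min\{X^+,Y^+\}$ and $\max\{X^-,Y^-\}\le\max\{X^+,Y^+\}$. It follows that both sides of (a) and (b) are well-defined elements of $\calI$.

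Next I compare the two sides coordinate by coordinate. For (a), the left endpoint of the left-hand side is $\max\{X^-,\min\{Y^-,Z^-\}\}$. The left endpoint of the right-hand side is $\min\{\max\{X^-,Y^-\},\max\{X^-,Z^-\}\}$. The same two expressions appear for the right endpoints with $+$ in place of $-$. So (a) reduces to the scalar identity
\begin{equation*}
\max\{x,\min\{y,z\}\}=\min\{\max\{x,y\},\max\{x,z\}\} \quad \text{for all } x,y,z\in\mathbb{R},
\end{equation*}
applied once to the left endpoints and once to the right endpoints. Part (b) reduces in the same way to the dual identity
\begin{equation*}
\min\{x,\max\{y,z\}\}=\max\{\min\{x,y\},\min\{x,z\}\}.
\end{equation*}

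To verify the scalar identity, I take $y\le z$ without loss of generality, since both sides are symmetric in $y$ and $z$. Then $\min\{y,z\}=y$. Moreover $\max\{x,y\}\le\max\{x,z\}$, so the right-hand side equals $\max\{x,y\}$, which matches the left-hand side. The dual identity follows by the symmetric argument, or by applying the first identity to $-x,-y,-z$.

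I do not expect a real obstacle here. The only point needing care is the feasibility check in the second paragraph, which ensures the coordinatewise operations stay within $\calI$.
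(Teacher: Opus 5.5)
Your proof is correct and follows the paper's route: both arguments treat left and right endpoints separately and reduce each identity to the scalar law $\max\{a,\min\{b,c\}\}=\min\{\max\{a,b\},\max\{a,c\}\}$ (and its dual), noting that the operators always return feasible intervals. You also verify the scalar identity and the feasibility explicitly, which the paper takes for granted.
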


\begin{proof}
    We apply the distributivity law of maximum over minimum, i.e., $\max \{a, \min\{b, c\}\} = \min\{\max\{a, b\}, \max\{a, c\}\}$ for all $a,b,c\in A$, to show the statement (a). Note that the outcomes of leftest and rightest interval operators are always feasible. We separately consider leftmost and rightmost alternatives. The leftmost alternative of the LHS of (a) is as follows:
\begin{equation*}
\begin{split}
    {\rightI}^-\{X, \leftI\{Y,Z\}\} 
    &= 
    \max\{X^-, {\leftI}^-\{Y, Z\}\} 
    =
    \max\{X^-, \min\{Y^-, Z^-\}\} 
    =
    \min\{\max\{X^-, Y^-\}, \max\{X^-, Z^-\}\} \\
    &=
    \min\{{\rightI}^-\{X, Y\}, {\rightI}^-\{X, Z\}\} 
    =
    {\leftI}^-\{\rightI\{X, Y\}, \rightI\{X, Z\}\}.
\end{split}
\end{equation*}
    Similarly, the equality for the rightmost alternative of (a) can be obtained. This proves the statement (a). The argument for the statement (b) is analogous and is therefore omitted.
\end{proof}

Suppose $n$ is an odd number. The median interval operator, denoted by $\medI(\cdot)$, is defined as $\medI(V) \equiv [\medI^-(V), \medI^+(V)]$, where $\medI^\pm(V) = \med(V_1^\pm, \cdots, V_n^\pm)$. It selects an interval of which the leftmost and rightmost alternatives are medians of leftmost and rightmost alternatives of peaks. 

Note that the feasibility of these operators is guaranteed by the following lemma. 

\begin{lemma}\label{lem:ind5}
    Fix $n$. Consider a sequence of intervals $V_1, \cdots, V_n\in\calI$, and separately arrange their leftmost and rightmost alternatives as $l_1 \leq \cdots \leq l_n$ and $r_1 \leq \cdots\leq r_n$. Then we have $l_k\leq r_k$ for all $k$. That is, the interval $[l_k, r_k]$ is always feasible.
\end{lemma}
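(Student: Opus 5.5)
We prove the claim by a counting argument on the $k$-th order statistics. Fix $k\in\{1,\dots,n\}$ and argue by contradiction: suppose $l_k > r_k$.

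Since $r_1\leq\cdots\leq r_n$ are the sorted right endpoints, at least $k$ of the intervals satisfy $V_i^+\leq r_k$. Choose such a set $S_r\subseteq\{1,\dots,n\}$ with $|S_r| = k$, namely the indices realizing $r_1,\dots,r_k$.

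Similarly, since $l_1\leq\cdots\leq l_n$ are the sorted left endpoints, at least $n-k+1$ of the intervals satisfy $V_i^-\geq l_k$. Let $S_l$ be the set of indices realizing $l_k,\dots,l_n$, so $|S_l| = n-k+1$.

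Because $|S_r|+|S_l| = n+1 > n$, the pigeonhole principle gives some index $i\in S_r\cap S_l$. For that interval,
\begin{equation*}
V_i^- \geq l_k > r_k \geq V_i^+ .
\end{equation*}
This contradicts the feasibility $V_i^-\leq V_i^+$ of $V_i\in\calI$. Hence $l_k\leq r_k$.

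It remains to check that $[l_k,r_k]\in\calI$. Each $l_k$ is some $V_j^-\in A$ and each $r_k$ is some $V_{j'}^+\in A$, so both endpoints lie in $A$. Together with $l_k\leq r_k$, this gives $[l_k,r_k]\in\calI$.

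The only delicate point is the index bookkeeping when endpoints are tied. This is handled by choosing $S_r$ and $S_l$ as the index sets underlying fixed sortings of the endpoints, rather than defining them by the inequalities $V_i^+\leq r_k$ and $V_i^-\geq l_k$. With fixed sortings the two sets have exactly the sizes $k$ and $n-k+1$, so the pigeonhole step goes through.

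Finally, the feasibility of $\leftI$, $\rightI$ and $\rankI_\gamma$ follows by taking $k=1$, $k=n$ and $k=\gamma$ respectively. For $\medI$ with $n$ odd, take $k=(n+1)/2$.
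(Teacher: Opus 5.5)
Your proof is correct and uses essentially the paper's counting argument. The paper takes the set $J$ of at least $k$ intervals with $V_j^+\leq r_k$ and notes that their left endpoints are also $\leq r_k$, giving $l_k\leq r_k$ directly; you instead intersect the $k$ indices realizing $r_1,\dots,r_k$ with the $n-k+1$ indices realizing $l_k,\dots,l_n$. Two small simplifications: any $i$ in that intersection gives $l_k\leq V_i^-\leq V_i^+\leq r_k$ outright, so the contradiction framing is unnecessary, and the tie bookkeeping is not delicate, since the inequality-defined sets are only larger and still intersect.
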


\begin{proof}
    For all $k=1, \cdots, n$, we know that $J = \{j \in\{1, \cdots, n\}: V_j^+\leq r_k\}$ has a cardinality $|J|\geq k$. Note that for all $j\in J$, we have $V_j^-\leq V_j^+\leq r_k$. So that we can always find at least $k$ leftmost alternatives of $V_j$ for all $j\in J$ such that $V_j^- \leq r_k$. It implies $l_k \leq \max_{j\in J} V_j^- \leq r_k$, as desired.
\end{proof}

In particular, for the median interval operator, we can select a feasible interval with endpoints among fixed ballots.

\begin{lemma}\label{lem:ind3}
    Fix $n$. Consider a sequence of intervals $V_1\leq\cdots\leq V_{n+1}\in\calI$. For all $k_1, k_2\in[n+1]$ such that $k_1\leq k_2$, there is
\begin{equation*}
    \medI(\overbrace{\{A^-\}, \cdots, \{A^-\}}^{n-k_2+1}, \overbrace{A, \cdots, A}^{k_2 - k_1}, \overbrace{\{A^+\}, \cdots, \{A^+\}}^{k_1-1}, V_1, \cdots, V_{n+1}) 
    =
    [V_{k_1}^-, V_{k_2}^+].
\end{equation*}
\end{lemma}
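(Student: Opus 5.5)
The median interval operator acts coordinatewise, so the plan is to compute the leftmost and rightmost endpoints separately. There are $(n-k_2+1)+(k_2-k_1)+(k_1-1)=n$ fixed ballots and $n+1$ intervals $V_j$, giving $2n+1$ intervals in total. This count is odd, so $\medI$ is well defined and each endpoint is the $(n+1)$-th smallest of the corresponding $2n+1$ scalars. Because $V_1\leq\cdots\leq V_{n+1}$, both sequences $V_1^-\leq\cdots\leq V_{n+1}^-$ and $V_1^+\leq\cdots\leq V_{n+1}^+$ are already sorted. The whole lemma then reduces to two counting arguments for the one-dimensional median.

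For the leftmost endpoint, the ballots contribute $n-k_2+1$ copies of $A^-$ from $\{A^-\}$ and $k_2-k_1$ copies of $A^-$ from $A$. They also contribute $k_1-1$ copies of $A^+$ from $\{A^+\}$. So there are $n-k_1+1$ values equal to $A^-$, which lie weakly below every $V_j^-$, and $k_1-1$ values equal to $A^+$, which lie weakly above every $V_j^-$. The $(n+1)$-th smallest of all $2n+1$ values is therefore the $k_1$-th smallest among the $V_j^-$, namely $V_{k_1}^-$. To make this robust to ties, I would verify it via the characterization of $\med$. At least $(n-k_1+1)+k_1=n+1$ values are $\leq V_{k_1}^-$, namely the $A^-$ copies together with $V_1^-,\dots,V_{k_1}^-$. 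At least $(k_1-1)+(n+1-k_1+1)=n+1$ values are $\geq V_{k_1}^-$, namely the $A^+$ copies together with $V_{k_1}^-,\dots,V_{n+1}^-$.

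For the rightmost endpoint, the ballots contribute $n-k_2+1$ values equal to $A^-$ from $\{A^-\}$. They contribute $(k_2-k_1)+(k_1-1)=k_2-1$ values equal to $A^+$, coming from $A$ and $\{A^+\}$. The same counting shows the median is $V_{k_2}^+$. At least $(n-k_2+1)+k_2=n+1$ values are $\leq V_{k_2}^+$, and at least $(k_2-1)+(n+2-k_2)=n+1$ values are $\geq V_{k_2}^+$.

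Finally, I would note that $[V_{k_1}^-,V_{k_2}^+]$ is feasible. This follows because $k_1\leq k_2$ and the $V_j$ are ordered, so $V_{k_1}^-\leq V_{k_2}^-\leq V_{k_2}^+$; alternatively it follows from Lemma~\ref{lem:ind5}. No step is a real obstacle. The only care needed is bookkeeping: correctly matching which fixed ballots contribute $A^-$ versus $A^+$ to each endpoint, and handling ties through the two-sided counting definition of $\med$ rather than through strict ranks.
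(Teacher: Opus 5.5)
Your proof is correct and follows the same route as the paper. The paper also computes the endpoints coordinatewise, using the scalar identity $\med(A^-,\dots,A^-,A^+,\dots,A^+,v_1,\dots,v_{n+1})=v_k$ (with $n-k+1$ copies of $A^-$ and $k-1$ copies of $A^+$) at $k=k_1$ for the left endpoints and $k=k_2$ for the right endpoints; your two-sided counting, which handles ties, just verifies that identity explicitly where the paper asserts it.
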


\begin{proof}
    Note that $\med(\overbrace{A^-, \cdots, A^-}^{n-k+1}, \overbrace{A^+, \cdots, A^+}^{k-1}, v_1, \cdots, v_{n+1}) = v_k$ for each ordered sequence $v_1\leq \cdots \leq v_{n+1}$. Separately applying it over leftmost alternatives and rightmost alternatives of sequence $V_1, \cdots, V_{n+1}$ with parameters $k_1$ and $k_2$, we obtain the result as desired, where $\{A^-\} = [A^-, A^-]$, $\{A^+\} = [A^+, A^+]$, and $A = [A^-, A^+]$. 
\end{proof}

\subsection{Voting schemes}

We focus on the social choice functions satisfying peak-onlyness. That is, the social choice based on reported preferences depends only on their peaks. We refer to the peak-only social choice function as the voting scheme, and simplify its inputs as peak intervals instead of preferences. Fix $n$. The scheme is a mapping $(X_1, \cdots, X_n)\mapsto C(X_1, \cdots, X_n)$ from the peak profile of $n$ agents to a choice of interval. 

Two important classes of voting schemes are as follows. Fix the number of agents $n$. 
The class of generalized median scheme is defined as
\begin{equation*}
    S_n
    = 
    \{
        C: \calI^n\to \calI
        \mid
        \exists D_1, \cdots, D_{n+1} \in\calI: 
        C(X_1, \cdots, X_n) = \medI (X_1, \cdots, X_n, D_1, \cdots, D_{n+1})
    \},
\end{equation*}
where $D_1, \cdots, D_{n+1}$ are referred to as the fixed ballots. 
The class of leftright scheme for $n$ agents is defined as
\begin{equation*}
    \Sigma_n
    =
    \left\{
        C:\calI^n\to \calI
        \ \middle|\ 
        \exists D_S \in\calI\ \text{for all $S\subseteq N$}: 
        \leftI_{S\subseteq N} \left\{\rightI_{i\in S}\{X_i, D_S\} \right\}
    \right\},
\end{equation*}
where $(D_S)_{S\subseteq N}$ are fixed ballots for all coalitions. 
We provide several examples to illustrate.

\begin{example}
    The leftright scheme and its decomposition.

    We illustrate the scheme for 3 agents. Let $X_1, X_2, X_3\in\calI$ denote agents' peaks. The leftright scheme $\mathrm{LR}\in\Sigma_3$ is
\begin{equation*}
    \mathrm{LR}(X_1, X_2, X_3)
    =
    \leftI\left\{\begin{array}{l}
        D_\varnothing, \rightI \{X_1, D_{\{1\}}\}, \rightI \{X_2, D_{\{2\}}\}, \rightI \{X_3, D_{\{3\}}\},  \\
        \rightI \{X_1, X_2, D_{\{1, 2\}}\}, \rightI \{X_1, X_3, D_{\{1, 3\}}\}, \rightI \{X_2, X_3, D_{\{2, 3\}}\}, \\
        \rightI \{X_1, X_2, X_3, D_{\{1, 2, 3\}}\}
    \end{array}
    \right\},
\end{equation*}
    where $(D_S)_{S\subseteq \{1, 2, 3\}}\in\calI^8$ denote fixed ballots for all coalitions.
    
    Fix $N$. We can also write the leftright scheme separately for leftmost and rightmost alternatives. It consists of two minmax schemes for alternatives instead of intervals, denoted by $\mathrm{mm}_1, \mathrm{mm}_2: A^N\to A$, under the feasibility constraint. Let $X\in\calI^N$ denote agents' peak profile. The scheme $\mathrm{LR}\in\Sigma_n$ is
\begin{equation*}
    \mathrm{LR}(X)
    =
    [\mathrm{mm}_1((X_i^-)_{i\in N}), \mathrm{mm}_2((X_i^+)_{i\in N})].
\end{equation*}
    Denote by $d_S^1, d_S^2\in A$, for all $S\subseteq N$, the fixed ballots of two minmax schemes, respectively. The feasibility constraint for the above decomposition is $d_S^1 \leq d_S^2$ for all $S\subseteq N$.
\end{example}

\begin{example}\label{ex:ind2}
    The generalized median scheme is a subclass of the leftright scheme. 

    Fix $N$. Let $(D_S)_{S\subseteq N}$ denote fixed ballots of the leftright scheme $\mathrm{LR}\in\Sigma_n$. Set $D_S = D_{|S|}$ for all $S\subseteq M$. We obtain the generalized median scheme $\mathrm{GM}\in S_n$ with fixed ballots $D_0, D_1, \cdots, D_n\in\calI$.

    Similar to the leftright scheme, we decompose the generalized median scheme for intervals $\mathrm{GM}: \calI^N\to\calI$ into two generalized median schemes for alternatives, denoted by $\mathrm{gm}_1, \mathrm{gm}_2: A^N\to A$, along with the feasibility constraint. Let $X\in\calI^N$ denote agents' peak profile. The scheme is
\begin{equation*}
    \mathrm{GM}(X)
    =
    [\mathrm{gm}_1((X_i^-)_{i\in N}), \mathrm{gm}_2((X_i^+)_{i\in N})].
\end{equation*}
    Let $d_1^1 \leq \cdots \leq d_{n+1}^1\in A$ and $d_1^2 \leq \cdots \leq d_{n+1}^2\in A$ denote the ordered fixed ballots of two generalized median schemes for alternatives, respectively. The feasibility constraint for the above decomposition is $d_j^1 \leq d_j^2$ for all $j = 1, \cdots, n+1$.
\end{example}

In particular, the generalized median scheme and the leftright scheme coincide in the case of $n=1$. 

\begin{lemma}\label{lem:ind2}
    $S_1 = \Sigma_1$. 
\end{lemma}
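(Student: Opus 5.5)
For $n=1$ I would prove the two inclusions $S_1\subseteq\Sigma_1$ and $\Sigma_1\subseteq S_1$ by writing both schemes coordinatewise and comparing them. Every scheme in either class has the form $X\mapsto[\,\cdot\,,\,\cdot\,]$, where each endpoint is computed from the corresponding endpoint of $X$. So the task reduces to a one-dimensional identity for each endpoint, plus a check that the fixed ballots produced on each side are feasible intervals.

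\emph{Step 1 (computing $\Sigma_1$).} With $N=\{1\}$ there are only two coalitions, $\varnothing$ and $\{1\}$. Hence
\[
C(X)=\leftI\{D_\varnothing,\rightI\{X,D_{\{1\}}\}\}.
\]
Coordinatewise this is
\[
C^\pm(X)=\min\{D_\varnothing^\pm,\max\{X^\pm,D_{\{1\}}^\pm\}\}.
\]

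\emph{Step 2 (computing $S_1$).} A generalized median scheme with ballots $D_1,D_2$ gives
\[
C^\pm(X)=\med(X^\pm,D_1^\pm,D_2^\pm).
\]
For scalars $a\le b$ we have the standard identity $\med(x,a,b)=\min\{b,\max\{x,a\}\}$.

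\emph{Step 3 ($S_1\subseteq\Sigma_1$).} Given ballots $D_1,D_2\in\calI$, set $D_{\{1\}}=\leftI\{D_1,D_2\}$ and $D_\varnothing=\rightI\{D_1,D_2\}$. By Lemma~\ref{lem:ind5} both are feasible intervals. In each coordinate $D_{\{1\}}^\pm\le D_\varnothing^\pm$ are exactly the two ballots sorted, and $\med$ is symmetric in its arguments. The scalar identity then gives equality of the two schemes in each coordinate.

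\emph{Step 4 ($\Sigma_1\subseteq S_1$).} This is the main obstacle, because in the leftright form we need not have $D_{\{1\}}\le D_\varnothing$. If $D_{\{1\}}^-> D_\varnothing^-$, the left coordinate is constantly $D_\varnothing^-$. In general the left coordinate equals
\[
\min\{D_\varnothing^-,\max\{x,\min\{D_{\{1\}}^-,D_\varnothing^-\}\}\},
\]
and the same holds for the right coordinate. I would check this by cases: when $D_{\{1\}}^-> D_\varnothing^-$ both sides equal $D_\varnothing^-$, since the inner maximum is at least $D_\varnothing^-$. So I replace $D_{\{1\}}$ by $D_1:=\leftI\{D_{\{1\}},D_\varnothing\}$ and take $D_2:=D_\varnothing$. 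Now $D_1\le D_2$ holds coordinatewise, and $D_1$ is a feasible interval by Lemma~\ref{lem:ind5}. Applying the identity $\med(x,a,b)=\min\{b,\max\{x,a\}\}$ with $a=D_1^\pm\le b=D_2^\pm$ then shows $C(X)=\medI(X,D_1,D_2)$.

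The only delicate point is this normalization of the ballots to a comparable pair while preserving feasibility, and Lemma~\ref{lem:ind5}, applied to the pair $\{D_{\{1\}},D_\varnothing\}$, supplies exactly that.
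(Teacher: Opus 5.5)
Your proof is correct and follows essentially the same route as the paper. Both normalize the coalition ballots so that $D_{\{1\}}\le D_\varnothing$ (the paper lowers $D_{\{1\}}^\pm$ to $D_\varnothing^\pm$ whenever the latter is smaller, which is exactly your $\leftI\{D_{\{1\}},D_\varnothing\}$), and both then check coordinatewise, by the three cases, that $\min\{b,\max\{x,a\}\}=\med(x,a,b)$ for $a\le b$. You are somewhat more careful than the paper, since you treat both inclusions explicitly and invoke Lemma~\ref{lem:ind5} for the feasibility of the sorted and normalized ballots.
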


\begin{proof}
    Fix $D_\varnothing, D_{\{1\}} \in\calI$.
    For all $X_1\in\calI$, we show $\leftI\{D_\varnothing, \rightI\{X_1, D_{\{1\}}\}\} = \medI(X_1, D_\varnothing, D_{\{1\}})$. 

    Without any loss, we assume $D_{\{1\}}\leq D_\varnothing$. Suppose $D_\varnothing^-\leq D_{\{1\}}^-$ for example. We have $D_\varnothing^-\leq\rightI^-\{X_1, D_{\{1\}}\}$, and thus $D_{\{1\}}^-$ plays no role in $\leftI^-\{D_\varnothing, \rightI\{X_1, D_{\{1\}}\}\}$. Then we can set it as $D_{\{1\}}^- = D_\varnothing^-$. A similar argument applies to the rightmost alternative. So $D_{\{1\}}\leq D_\varnothing$ is always kept.

    Let $l_1 = D_{\{1\}}^-$, $r_1 = D_{\{1\}}^+$, $l_2 = D_\varnothing^-$, $r_2 = D_\varnothing^+$, where $l_1\leq l_2$ and $r_1\leq r_2$. 
    Consider the leftmost alternatives.
    If $X_1^-\leq l_1$, then $\rightI^-\{X_1, D_{\{1\}}\} = \max\{X_1^-, l_1\} = l_1$, and thus $\leftI^-\{D_\varnothing, \rightI\{X_1, D_{\{1\}}\}\} = \min\{l_2,l_1\} = l_1$, which coincides with $\medI^-(X_1, D_\varnothing, D_{\{1\}}) = \med(X_1^-, l_1, l_2) = l_1$. If $X_1^-\geq l_2$, we have $\rightI^-\{X_1, D_{\{1\}}\} = \max\{X_1^-, l_1\} = X_1^-$ and $\leftI^-\{D_\varnothing, \rightI\{X_1, D_{\{1\}}\}\} = \min\{l_2, X_1^-\} = l_2=\medI^-(X_1, D_\varnothing, D_{\{1\}})$. If $l_1\leq X_1^-\leq l_2$, there is $\rightI^-\{X_1, D_{\{1\}}\} = \max\{X_1^-, l_1\} = X_1^-$ and $\leftI^-\{D_\varnothing, \rightI\{X_1, D_{\{1\}}\}\} = \min\{l_2, X_1^-\} = X_1^- = \medI^-(X_1, D_\varnothing, D_{\{1\}})$. Similarly, we get $\leftI^+\{D_\varnothing, \rightI\{X_1, D_{\{1\}}\}\} = \medI^+(X_1, D_\varnothing, D_{\{1\}})$ for the rightmost alternatives.
\end{proof}

\begin{example}
    The constant scheme belongs to the generalized median scheme, and thus it belongs to the leftright scheme. 

    Fix $n$. Let $(D_j)_{j = 1}^{n+1}\in\calI^{n+1}$ denote the fixed ballots of the generalized median scheme $\mathrm{GM}\in S_n$. Set $D_j = D_0$ for all $j = 1, \cdots, n+1$. We obtain the constant scheme $\mathrm{CS}(\cdot)\equiv D_0$.
\end{example}

\begin{example}
    The dictatorial scheme belongs to the leftright scheme. 

    Fix $N$. Let $(D_S)_{S\subseteq N}\in\calI^{2^n}$ denote the fixed ballots of the leftright scheme $\mathrm{GM}\in S_n$. Suppose agent $i\in N$ is the dictator. Set $D_{\{i\}} = \{A^-\}$ and $D_S = \{A^+\}$ for all $S\neq \{i\}$. We get $\rightI\{X_i, D_{\{i\}}\} = X_i \leq \{A^+\}$ and $\rightI\{X_S, D_S\} = \{A^+\}$ for all $S\neq\{i\}$. Then $\mathrm{LR}(X) = \leftI\{X_i, \{A^+\}, \cdots, \{A^+\}\} = X_i$, which is a dictatorial scheme. 
\end{example}

\begin{example}
    The leftright scheme allows the mixture. 

    Fix $N$. A voting scheme $C:\calI^N\to\calI$, of which the leftmost alternative follows a generalized median scheme for alternatives $\mathrm{gm}:A^N\to A$ and the rightmost alternative is a constant scheme, also belongs to the leftright scheme under the feasibility constraint. Let $X\in\calI^N$ denote agents' peak profile. The scheme is
\begin{equation*}
    C(X)
    =
    [\mathrm{gm}((X_j^-)_{j\in N}), d_0],
\end{equation*}
    for some $d_0\in A$.
    Let $(d_j)_{j = 1}^{n+1}\in A^{n+1}$ denote fixed ballots of $\mathrm{gm}$. The feasibility constraint is $d_j \leq d_0$ for all $j = 1, \cdots, n+1$.
\end{example}

\begin{example}
    The leftright scheme can be reduced into the minmax scheme for alternatives.

    Fix $N$. Consider the following mixed voting scheme $C:\calI^N\to\calI$. Let $X\in\calI^N$ denote the peak profile. The scheme is
\begin{equation*}
    C(X)
    = 
    [\mathrm{mm}((X_i^-)_{i\in N}), A^+]. 
\end{equation*}
    Note that there is no constraint on fixed ballots of the minmax scheme. In this scheme, only leftmost alternatives of reported peaks matter, which follows the one-dimensional minmax scheme as in the traditional setting. 
\end{example}

In the class of generalized median scheme, we introduce more specific schemes. Fix $n$.
The restricted generalized median scheme is a generalized median scheme with $n-1$, instead of $n+1$, fixed ballots. It can be represented by a generalized median scheme such that two fixed ballots are set as $\{A^-\}$ and $\{A^+\}$, respectively. 



The rank-based scheme selects the $\gamma$-th leftest interval, i.e., $\rankI_\gamma(X)$, constructed by the peak profile $X\in\calI^N$ for some $\gamma$. It can be represented as a restricted generalized median scheme such that $\gamma$ fixed ballots are $\{A^+\}$ and other fixed ballots are $\{A^-\}$. 
Fix an odd $n$, the median scheme $\medI(X)\equiv\rankI_{\frac{n+1}{2}}(X)$ is a rank-based scheme with parameter $(n+1)/2$. 

\begin{example}
    A numerical example for the rank-based scheme and the median scheme.

    Consider the peak profile $X = ([1, 8], [3, 4], [2, 3], [7, 8], [2, 2])$ . We separately sort their leftmost and rightmost alternatives in ascending order as (1, 2, 2, 3, 7) and (2, 3, 4, 8, 8). Then the 1-th rank-based scheme selects $\rankI_1(X)=[1,2]$; the 4-th rank-based scheme selects $\rankI_4(X)=[3, 8]$; and the median scheme, as well as the 3-th rank-based scheme, selects $\medI(X) = \rankI_3(X) = [2, 4]$.
\end{example}

\section{Strategy-proof voting scheme}\label{sec:ind3}

\subsection{Strategy-proofness, uncompromisingness, and separability}

In this section, we provide some preliminaries for the analysis based on strategy-proofness. Formally, the axiom is defined as follows: 

\begin{axiom}[Strategy-proofness]
    Fix $N$. For all $\succsim\in\mathcal{P}^N$ with peaks $X\in\calI^N$, all $i\in N$, and all $Y_i\in \calI$, there is $C(X) \succsim_i C(Y_i, X_{-i})$.
\end{axiom}

Strategy-proofness says that reporting the true peak is a weakly dominant strategy for each agent. 

This axiom can be translated into a geometric property of the choice function. That is, the original choice should be located between the peak of the agent who misreports and the manipulated choice. This approach is used in \cite{moulin1980} and then generalized and formalized into uncompromisingness in \cite{border1983}.

\begin{axiom}[Uncompromisingness]
    Fix $N$. For all $X\in\calI^N$, all $i\in N$, and all $Y_i\in \calI$, there is $C(X) \in B(X_i, C(Y_i, X_{-i}))$.
\end{axiom}

\begin{lemma}\label{lem:ind1}
    Strategy-proofness is equivalent to uncompromisingness.
\end{lemma}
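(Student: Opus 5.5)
We prove the two implications separately, taking $C$ to be a voting scheme, so the outcome depends only on reported peaks. Throughout, fix $N$, a profile $X\in\calI^N$, an agent $i$ and a misreport $Y_i$. Write $Z = C(X)$ and $W = C(Y_i, X_{-i})$.

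\emph{Uncompromisingness implies strategy-proofness.} Let $\succsim_i\in\mathcal{P}$ have peak $X_i$. Uncompromisingness gives $Z\in B(X_i, W)$.
\begin{itemize}
\item If $Z = W$, then indifference gives $Z\succsim_i W$ trivially.
\item If $Z\neq W$, part (b) of single-peakedness (with peak $X_i^* = X_i$, $Y = W$, $Z = Z$) gives $Z\succ_i W$.
\end{itemize}
Since the preference profile enters $C$ only through the peaks, this holds for every preference in $\mathcal{P}$ with peak $X_i$. Hence $C(X)\succsim_i C(Y_i,X_{-i})$, which is strategy-proofness.

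\emph{Strategy-proofness implies uncompromisingness.} We argue by contraposition. Suppose $Z\notin B(X_i, W)$ for some $X, i, Y_i$. Apply the Richness condition with peak $X_i$, the point $W$ in the role of ``$Y$'', and $Z$ in the role of ``$Z$''. This yields a preference $\succsim\in\mathcal{P}$ with peak $X_i$ such that $W\succ Z$.

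Now build a preference profile: agent $i$ gets this $\succsim$, and every other agent $j$ gets any preference in $\mathcal{P}$ with peak $X_j$. The domain is nonempty for each peak, for instance via the $L_p$ example or Case 3's construction. By peak-onlyness, truthful reporting yields $Z$, and misreporting $Y_i$ yields $W$. Since $W\succ_i Z$, agent $i$ strictly gains from misreporting, contradicting strategy-proofness. Hence $Z\in B(X_i,W)$.

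\emph{Where the care lies.} The only delicate point is the richness step. The Richness axiom as stated presupposes that the three intervals satisfy $Z\notin B(X,Y)$, which is exactly our hypothesis, so the axiom can be invoked directly. The paper's Example only verified richness in the case $Y\le X$; the remaining configurations follow by the symmetric constructions, and we rely on the axiom as stated.

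We also need that a preference with a given peak exists for the other agents. This is provided by the $L_p$ preferences, which are single-peaked over intervals for any peak.
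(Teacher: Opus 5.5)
Your proof is correct and follows the paper's argument essentially step for step: the richness condition, applied by contradiction or contraposition, gives strategy-proofness $\Rightarrow$ uncompromisingness, and part (b) of single-peakedness gives the converse. Two small differences are improvements: your split on whether $C(X)=C(Y_i,X_{-i})$ absorbs the paper's preliminary case $X_i=C(Y_i,X_{-i})$ (since $B(W,W)=\{W\}$), and you explicitly assign preferences to the other agents, a step the paper leaves implicit.
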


\begin{proof}
    Fix $N$. Suppose a voting scheme $C:\calI^N\to\calI$ satisfies strategy-proofness, but for some $X\in\calI^N$, $i\in N$, and $Y_i\in\calI$, there is $C(X)\notin B(X_i, C(Y_i, X_{-i}))$. By richness condition, there exists a preference $\succsim_i\in\mathcal{P}$ such that $C(Y_i, X_{-i})\succ_i C(X)$. Note that strategy-proofness requires $C(X) \succsim_i C(Y_i, X_{-i})$ for all $\succsim_i\in\mathcal{P}$. This leads to a contradiction. Therefore, we must have $C(X)\in B(X_i, C(Y_i, X_{-i}))$. 

    Conversely, suppose the voting scheme $C$ satisfies uncompromisingness. Consider any peak profile $X\in\calI^N$, agent $i\in N$, and a deviation $Y_i\in\calI$. From uncompromisingness, we have $C(X)\in B(X_i, C(Y_i, X_{-i}))$. 
    If $X_i = C(Y_i, X_{-i})$, we have $X_i = C(X) = C(Y_i, X_{-i})$, and thus $C(X)\succsim_i C(Y_i, X_{-i})$ always holds, as desired.
    Suppose $X_i \neq C(Y_i, X_{-i})$. Since the agent $i$ has a single-peaked preference, it implies $X_i\succsim_i C(X)\succ_i C(Y_i, X_{-i})$ if $C(X)\neq C(Y_i, X_{-i})$. Note that $C(X)\succsim_i C(Y_i, X_{-i})$ always holds if $C(X)= C(Y_i, X_{-i})$. The strategy-proofness is thus verified.
\end{proof}

Another consequence of strategy-proofness on a multidimensional single-peaked domain is separability \citep*{barbera1993}. 

\begin{axiom}[Separability]
    Fix $N$. There exist two functions $f_l, f_r: A^N\to A$ such that for all $X\in\calI^N$, $C^-(X) = f_l((X_i^-)_{i\in N})$ and $C^+(X) = f_r((X_i^+)_{i\in N})$.
\end{axiom}

\begin{lemma}\label{lem:ind10}
    Strategy-proofness implies separability.
\end{lemma}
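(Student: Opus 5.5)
By Lemma~\ref{lem:ind1}, it suffices to show that an uncompromising voting scheme $C:\calI^N\to\calI$ has separable coordinates: $C^-(X)$ depends only on $(X_i^-)_{i\in N}$, and $C^+(X)$ depends only on $(X_i^+)_{i\in N}$. Since any two profiles differ in finitely many agents, we can change agents one at a time. So the key claim is a one-agent statement. Fix $X$ and $i$, and let $Y_i$ satisfy $Y_i^-=X_i^-$. Then $C^-(Y_i,X_{-i})=C^-(X)$. The symmetric claim, with $Y_i^+=X_i^+$ implying $C^+(Y_i,X_{-i})=C^+(X)$, follows by the mirror argument. Chaining such one-agent changes gives $f_l$ and $f_r$ as well-defined functions of the endpoint profiles.

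To prove the key claim, write $c=C(X)$ and $c'=C(Y_i,X_{-i})$, and let $x=X_i^-=Y_i^-$. Apply uncompromisingness in both directions.

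First, $c\in B(X_i,c')$, so $c^-$ lies between $x$ and $c'^-$. Second, $c'\in B(Y_i,c)$, so $c'^-$ lies between $x$ and $c^-$.

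Suppose first that $c^-$ and $c'^-$ are on the same side of $x$, say $x\le c^-$ and $x\le c'^-$. The first condition gives $c^-\le c'^-$, and the second gives $c'^-\le c^-$. Hence $c^-=c'^-$. The case where both are $\le x$ is the same.

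Now suppose they are on opposite sides, say $c^-<x<c'^-$. Then the first condition fails, because $c^-$ must lie between $x$ and $c'^-$, which is impossible. So this case cannot occur, and in every case $c^-=c'^-$.

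The argument only uses that betweenness is coordinatewise, which is exactly how $B$ is defined. The feasibility constraint $l\le r$ never enters. Uncompromisingness quantifies over all $Y_i\in\calI$, and both profiles in the argument are feasible, so no richness beyond Lemma~\ref{lem:ind1} is needed.

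The step I expect to need the most care is the chaining, since intermediate profiles must stay in $\calI^N$. Suppose $X$ and $X'$ have the same left endpoints. Moving agent $i$ from $X_i$ to $[X_i^-,X_i'^+]$ is feasible, because $X_i^-=X_i'^-\le X_i'^+$. Doing this one agent at a time keeps every intermediate profile feasible and preserves $C^-$ at each step, by the key claim. For the general definition of $f_l$, set $f_l(l_1,\dots,l_n)=C^-(X)$ for any feasible $X$ with $X_i^-=l_i$. Such an $X$ exists, for example $X_i=[l_i,A^+]$. By the chaining just described, the value does not depend on which $X$ is chosen, so $f_l$ is well defined. The function $f_r$ is defined symmetrically, which establishes separability.
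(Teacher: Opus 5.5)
Your proof is correct and takes essentially the same route as the paper. You apply uncompromisingness in both directions to a one-agent deviation $Y_i$ with $Y_i^-=X_i^-$, and conclude $C^-(X)=C^-(Y_i,X_{-i})$ by ruling out strict inequality; your same-side/opposite-side case split just repackages the paper's two strict-inequality cases. Your explicit chaining and the construction $f_l(l_1,\dots,l_n)=C^-([l_1,A^+],\dots,[l_n,A^+])$ make precise the well-definedness step that the paper only asserts with ``$i$ is arbitrarily chosen.''
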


\begin{proof}
    Fix $N$. Suppose a voting scheme $C:\calI^N\to \calI$ satisfies strategy-proofness, and thus uncompromisingness. We show that there exists a left function $f_l: A^N\to A$ such that for all $X\in\calI^N$, $C^-(X) = f_l((X_i^-)_{i\in N})$. That is, the leftmost alternative of the social choice only depends on the leftmost alternatives of intervals $X$. The claim for the right function $f_r$ follows the same argument.

    Consider any peak profile $X\in \calI^N$, an agent $i\in N$, and an interval $Y_i\in \calI$ such that $X_i^- = Y_i^-$. From uncompromisingness, we have $C(X) \in B(X_i, C(Y_i, X_{-i}))$ and $C(Y_i, X_{-i}) \in B(Y_i, C(X))$. It implies that 
\begin{align}
    C^-(X)
    &\in
    [\min\{X_i^-, C^-(Y_i, X_{-i})\}, \max\{X_i^-, C^-(Y_i, X_{-i})\}]; \label{eq:ind109a}
    \\
    C^-(Y_i, X_{-i})
    &\in
    [\min\{Y_i^-, C^-(X)\}, \max \{Y_i^-, C^-(X)\}] 
    = 
    [\min\{X_i^-, C^-(X)\}, \max \{X_i^-, C^-(X)\}]. \label{eq:ind109b}
\end{align}

    We show that $C^-(X) = C^-(Y_i, X_{-i})$. 
    Suppose $C^-(X) < C^-(Y_i, X_{-i})$. (\ref{eq:ind109a}) is equivalent to $X_i^-\leq C^-(X) < C^-(Y_i, X_{-i})$; and (\ref{eq:ind109b}) is equivalent to $C^-(X) < C^-(Y_i, X_{-i})\leq X_i^-$. Then $X_i^-\leq C^-(X) < C^-(Y_i, X_{-i})\leq X_i^-$ leads to a contradiction. Thus $C^-(X) \geq C^-(Y_i, X_{-i})$. 
    Similarly, suppose $C^-(X) > C^-(Y_i, X_{-i})$. (\ref{eq:ind109a}) is equivalent to $C^-(Y_i, X_{-i}) < C^-(X) \leq X_i^-$, while (\ref{eq:ind109b}) implies that $C^-(Y_i, X_{-i})\geq X_i^-$. This constructs a contradiction. So $C^-(X) \leq C^-(Y_i, X_{-i})$. 
    
    From $X_i^- = Y_i^-$ and $C^-(X) = C^-(Y_i, X_{-i})$, we know that $C^-(X)$ is independent of the choice of rightmost alternative $X_i^+$. Note that $i$ is arbitrarily chosen. It implies that $C^-(\cdot)$ only depends on the leftmost alternatives, as desired.  
\end{proof}

We provide an example to show how two properties, i.e., uncompromisingness and separability, apply to the voting scheme. 

\begin{lemma}\label{lem:ind11}
    Strategy-proofness implies that a voting scheme for one agent belongs to $S_1\equiv\Sigma_1$. 
\end{lemma}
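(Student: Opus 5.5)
By Lemma~\ref{lem:ind10}, strategy-proofness gives separability: there are $f_l,f_r:A\to A$ with $C(X_1)=[f_l(X_1^-),f_r(X_1^+)]$. The plan is to show that each coordinate function is a one-dimensional clamp, $f_l(x)=\med(x,a_1,a_2)$ and $f_r(y)=\med(y,b_1,b_2)$. We then choose fixed ballots $D_1=[a_1,b_1]$ and $D_2=[a_2,b_2]$ so that $C(X_1)=\medI(X_1,D_1,D_2)\in S_1=\Sigma_1$ (Lemma~\ref{lem:ind2}). The real issue is feasibility: we must check $a_1\le b_1$ and $a_2\le b_2$.

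\emph{Step 1 (coordinatewise uncompromisingness).} With one agent, Lemma~\ref{lem:ind1} gives $C(X_1)\in B(X_1,C(Y_1))$ for all $X_1,Y_1$. Taking the leftmost coordinates, $f_l(x)$ lies between $x$ and $f_l(y)$ for all $x,y$ that occur as left endpoints, and every $x\in A$ is a left endpoint (take $[x,A^+]$). This is exactly Moulin's one-agent uncompromisingness.

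\emph{Step 2 (clamp form).} Let $a_1=\inf f_l(A)$ and $a_2=\sup f_l(A)$. These are attained: the function $f_l$ is continuous (indeed $|f_l(x)-f_l(y)|\le|x-y|$ follows from Step 1 by case analysis), and $A$ is compact. Now take $x$ in the range of $f_l$, say $x=f_l(y)$. Step 1 gives $f_l(x)$ between $x$ and $f_l(y)=x$, so $f_l(x)=x$; hence $f_l$ fixes every point of its range. For $x<a_1$, pick $y$ with $f_l(y)=a_1$; then $f_l(x)\in[x,a_1]$ and $f_l(x)\ge a_1$, so $f_l(x)=a_1$. The case $x>a_2$ is symmetric. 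For $x\in[a_1,a_2]$, the range is an interval by continuity, so $f_l(x)=x$. Thus $f_l(x)=\med(x,a_1,a_2)$, and likewise $f_r(y)=\med(y,b_1,b_2)$.

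\emph{Step 3 (feasibility of the ballots).} This is the step I expect to need care. Evaluating at $X_1=\{A^-\}$ gives $C=[a_1,b_1]$, which is an interval, so $a_1\le b_1$. Evaluating at $X_1=\{A^+\}$ gives $C=[a_2,b_2]$, so $a_2\le b_2$. Hence $D_1=[a_1,b_1]$ and $D_2=[a_2,b_2]$ belong to $\calI$, and $C(X_1)=\medI(X_1,D_1,D_2)$ coordinatewise. This places $C$ in $S_1$, and Lemma~\ref{lem:ind2} completes the proof.
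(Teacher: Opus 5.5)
Your proposal is correct and follows essentially the paper's route. Both arguments use separability, then Moulin's one-dimensional clamp argument via uncompromisingness against the extreme values of $C^-$ and $C^+$, with fixed ballots $[\min C^-,\min C^+]$ and $[\max C^-,\max C^+]$.

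The differences are minor. Your Lipschitz/continuity detour is unnecessary: Step 1 already gives $f_l(A^-)\le f_l(y)\le f_l(A^+)$ for all $y$, so the extremes are attained at the endpoints. For $a_1\le x\le a_2$, the paper's sandwich $f_l(x)\in[a_1,x]\cap[x,a_2]$ gives $f_l(x)=x$ directly. On the other hand, your evaluation at $\{A^-\}$ and $\{A^+\}$ justifies the feasibility conditions $a_1\le b_1$ and $a_2\le b_2$ more explicitly than the paper, which simply asserts them.
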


\begin{proof}
    From Lemma~\ref{lem:ind2}, we have shown $S_1 = \Sigma_1$. We only show that strategy-proofness implies the voting scheme $C:\calI\to\calI$ belongs to $S_1$. That is, for all $X_1\in\calI$, we have $C(X_1) = \medI(X_1, D_1, D_2)$ for some $D_1, D_2\in\calI$. 

    We construct two fixed ballots $D_1, D_2$. Let $l_1 = \min_{V\in\calI} C^-(V)$, $l_2 = \max_{V\in\calI} C^-(V)$, $r_1 = \min_{V\in\calI} C^+(V)$, and $r_2 = \max_{V\in\calI} C^+(V)$. Note that $l_1\leq r_1$ and $l_2\leq r_2$. Set $D_1 = [l_1, r_1]$ and $D_2 = [l_2, r_2]$. Suppose $L_1, R_1$ are intervals such that $C^-(L_1) = l_1$ and $C^+(R_1) = r_1$. From separability, we have $C^-(L_1) = C^-(D_1) = l_1$, and $C^+(R_1) = C^+(D_1) = r_1$. Similarly, we obtain $C(D_2) = [l_2, r_2]$.
    
    We consider the leftmost alternative. From uncompromisingness, we have $C(X_1)\in B(X_1, C(D_1))$ and $C(X_1)\in B(X_1, C(D_2))$. Then we have
\begin{align}
    \min \{X_1^-, l_1\} \leq C^-(X_1) \leq \max \{X_1^-, l_1\} \label{eq:ind101a}\\
    \min \{X_1^-, l_2\} \leq C^-(X_1) \leq \max \{X_1^-, l_2\} \label{eq:ind101b}
\end{align}
    If $X_1^- \leq l_1$, (\ref{eq:ind101a}) implies $C^-(X_1)\leq l_1 = \min_{V\in\calI} C^-(V)$, and thus $C^-(X_1) = l_1$. 
    Similarly, if $X_1^- \geq l_2$, (\ref{eq:ind101b}) implies $\max_{V\in\calI} C^-(V) = l_2 \leq C^-(X_1)$, and thus $C^-(X_1) = l_2$. If $l_1\leq X_1^-\leq l_2$, (\ref{eq:ind101a}) implies $C^-(X_1)\leq X_1^-$, and (\ref{eq:ind101b}) implies $X_1^- \leq C^-(X_1)$. Thus we get $X_1^- = C^-(X_1)$. 
    
    Notice that the above analysis shows that $C^-(X_1) = \med (X_1^-, D_1^-, D_2^-)$. A similar argument shows that $C^+(X_1) = \med (X_1^+, D_1^+, D_2^+)$. That is, $C(X_1) = \medI(X_1, D_1, D_2)$, as desired.
\end{proof}

\subsection{The leftright scheme}

\begin{theorem}\label{thm:ind5}
    The leftright scheme is the only strategy-proof voting scheme.
\end{theorem}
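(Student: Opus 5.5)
The proof has two directions: every leftright scheme is strategy-proof, and every strategy-proof voting scheme is a leftright scheme. In both, Lemma~\ref{lem:ind1} lets us replace strategy-proofness by uncompromisingness, and the betweenness set $B(\cdot,\cdot)$ is a product of two one-dimensional betweenness conditions. The plan is therefore to reduce everything to the one-dimensional minmax characterization of \cite{moulin1980}, applied separately to the leftmost and rightmost coordinates. The only genuinely interval-specific issue is feasibility of the fixed ballots.

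\emph{Sufficiency.} Fix $(D_S)_{S\subseteq N}$ and let $C(X)=\leftI_{S\subseteq N}\{\rightI_{i\in S}\{X_i,D_S\}\}$.
First, $C$ is well defined into $\calI$. Each $\rightI\{\cdot\}$ and $\leftI\{\cdot\}$ of feasible intervals is feasible by Lemma~\ref{lem:ind5}, taking $k=n$ and $k=1$ respectively.
Second, by the definitions of $\leftI$ and $\rightI$,
\begin{equation*}
C^-(X)=\min_{S}\max\{\max_{i\in S}X_i^-,D_S^-\},\qquad C^+(X)=\min_{S}\max\{\max_{i\in S}X_i^+,D_S^+\}.
\end{equation*}
Each coordinate is thus a one-dimensional minmax scheme, which is uncompromising in Moulin's sense: $f(x)$ lies between $x_i$ and $f(y_i,x_{-i})$.
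Applying this coordinatewise gives $C(X)\in B(X_i,C(Y_i,X_{-i}))$, so $C$ is uncompromising and hence strategy-proof by Lemma~\ref{lem:ind1}.
If one prefers not to cite Moulin here, monotonicity of $\min$ and $\max$ gives the one-dimensional uncompromisingness directly. Raising $x_i$ weakly raises $f$, and if $x_i\le f(x)$ then $f$ is unchanged by moving $x_i$ anywhere below $f(x)$; similarly from above.

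\emph{Necessity.} Let $C$ be strategy-proof.
By Lemma~\ref{lem:ind10} there exist $f_l,f_r:A^N\to A$ with $C^-(X)=f_l((X_i^-)_i)$ and $C^+(X)=f_r((X_i^+)_i)$.
Every vector $x\in A^N$ is realized as the profile of leftmost endpoints of some interval profile, for instance $X_i=[x_i,A^+]$. Hence uncompromisingness of $C$, read in the first coordinate, gives that $f_l$ is uncompromising on all of $A^N$; symmetrically for $f_r$.
Moulin's characterization then says $f_l$ and $f_r$ are minmax schemes. What remains is to choose the two families of ballots so that they assemble into feasible intervals $D_S$.

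For this I would use the canonical ballots. For $S\subseteq N$, let $E^S\in\calI^N$ be the profile with $E^S_i=\{A^-\}$ for $i\in S$ and $E^S_i=\{A^+\}$ for $i\notin S$, and set $D_S:=C(E^S)$.
These are automatically feasible intervals with $D_S^-=f_l(e^S)$ and $D_S^+=f_r(e^S)$, where $e^S$ is the scalar vector with $A^-$ on $S$ and $A^+$ off $S$.
It then suffices to show the one-dimensional claim that every uncompromising $f:A^N\to A$ satisfies
\begin{equation*}
f(x)=\min_S\max\{\max_{i\in S}x_i,f(e^S)\}.
\end{equation*}
Combining the two coordinates then gives $C(X)=\leftI_S\{\rightI_{i\in S}\{X_i,D_S\}\}$.

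I expect this one-dimensional identity to be the main obstacle, if it is proved rather than cited. The plan is to argue by induction on $n$, following step 4 of Moulin's proof.
The base case $n=1$ is exactly the computation in Lemma~\ref{lem:ind11}. There $f(A^-)=l_1$ and $f(A^+)=l_2$, and the median form $\med(x,l_1,l_2)$ coincides with $\min\{l_2,\max\{x,l_1\}\}$, as in Lemma~\ref{lem:ind2}.
For the inductive step, fix $x_n$. The map $x_{-n}\mapsto f(x_{-n},x_n)$ is uncompromising, so by induction it is a minmax scheme with ballots $f(e^T,x_n)$ for $T\subseteq N\setminus\{n\}$. Each such ballot, viewed as a function of $x_n$ alone, is uncompromising, so by the base case it equals $\min\{f(e^T,A^+),\max\{x_n,f(e^{T\cup\{n\}})\}\}$.
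Substituting and using the distributivity of $\max$ over $\min$ (the scalar version of Lemma~\ref{lem:ind6}) should collapse the expression to the $2^n$-term minmax formula with ballots $f(e^S)$.

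The delicate point is checking that the substitution yields exactly one term per coalition $S$, with $S=T$ or $S=T\cup\{n\}$, and no spurious extra terms. I would verify this by expanding once and observing that each cross term is dominated by a term already present.
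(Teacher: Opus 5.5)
Your argument is correct. The sufficiency half matches the paper's in substance; the paper writes out the coordinatewise case analysis instead of citing Moulin. The necessity half takes a different route.

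\textbf{The paper's route.} The paper never passes to scalar functions. It inducts directly on interval-valued strategy-proof schemes. It takes the $\Sigma_n$ representation of $(X_1,\ldots,X_n)\mapsto C(X_0,X_1,\ldots,X_n)$, with ballots $D_S(X_0)$ first normalized so that $S\subseteq T$ implies $D_S\geq D_T$. It then identifies $D_S(X_0)$ with the outcome at your extreme profile $E^S$ (with $X_0$ added), applies Lemma~\ref{lem:ind11} to each $D_S(\cdot)$, and collapses everything with the interval distributivity of Lemma~\ref{lem:ind6}. Feasibility never needs checking, because every object in that induction is already an interval.

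\textbf{Your route.} You instead decouple first via Lemma~\ref{lem:ind10}. You transfer uncompromisingness to $f_l$ and $f_r$ through the realizations $[x_i,A^+]$ and $[A^-,x_i]$, and prove Moulin's theorem in the sharpened form $f(x)=\min_S\max\{\max_{i\in S}x_i,f(e^S)\}$. Feasibility is then settled in one line, because $f_l(e^S)$ and $f_r(e^S)$ are the two endpoints of the single interval $C(E^S)$.

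\textbf{Comparison.} The inductive engine is the same in both: fix one agent, apply the hypothesis to the rest, apply the one-agent case to each ballot, then distribute. Your induction hypothesis names the ballots explicitly, so no normalization or selection of ballots is needed. It also makes precise the sense in which a leftright scheme is a pair of minmax schemes whose canonical ballots are automatically coupled. The paper's version buys staying entirely at the interval level, with no realizability step.

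\textbf{Two minor remarks.}
\begin{enumerate}
\item The step you flag as delicate is not. The identity $\max\{a,\min\{b,c\}\}=\min\{\max\{a,b\},\max\{a,c\}\}$ is exact. With $a=\max_{i\in T}x_i$, $b=f(e^T)$ and $c=\max\{x_n,f(e^{T\cup\{n\}})\}$, it produces precisely the terms for $T$ and $T\cup\{n\}$, with no cross terms to dominate.
\item In your optional monotonicity argument for sufficiency, the invariance claim needs the strict inequality $x_i<f(x)$. For $x_i=f(x)$ it fails (e.g.\ for a dictator), though uncompromisingness is trivial in that case.
\end{enumerate}
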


\begin{proof}
    Fix $n$. To assist our analysis, we assume that in the class of leftright schemes $\Sigma_n$, there is $D_S \geq D_T$ for two coalitions $S, T\subseteq N$ such that $S\subseteq T$. Suppose, for example, $D_S^- \leq D_T^-$. Combining $\rightI_{i\in S}^- X_i \leq \rightI_{i\in T}^- X_i$, we get $\rightI_{i\in S}^-\{X_i, D_S\}\leq \rightI_{i\in T}^-\{X_i, D_T\}$. Thus in $\leftI_{S'\subseteq N} \{\rightI_{i\in S'}\{X_i, D_{S'}\}\}$, $D_T^-$ plays no role, and then it can be set as $D_T^- = D_S^-$ in this case. Similarly, we can set $D_T^+ = D_S^+$ in the case of $D_S^+ \leq D_T^+$. 
    
    We show that the leftright scheme satisfies strategy-proofness. Fix $N$. Consider a profile $X\in\calI^N$, an agent $i\in N$, and a deviation $Y_i\in\calI$. We want to show uncompromisingness, i.e., $C(X)\in B(X_i, C(Y_i, X_{-i}))$. We show the claim for the leftmost alternative.
    
    Note that $C^-(X) = \leftI_{S\subseteq N}^-\{ \rightI_{j\in S}\{X_j, D_S\} \} = \min_{S\subseteq N} \{\max_{j\in S}\{X_j^-, D_S^-\}\}$. 

    Case 1: $C^-(X) = X_i^-$. Then, for all $C(Y_i, X_{-i})$, we have either $C^-(Y_i, X_{-i})\leq C^-(X) = X_i^-$ or $C^-(Y_i, X_{-i})\geq C^-(X) = X_i^-$.  

    Case 2: $C^-(X) > X_i^-$. We need to show $C^-(X)\leq C^-(Y_i, X_{-i})$ for all $Y_i\in\calI$. 
    If $Y_i^-\geq X_i^-$, then for all $S\subseteq N$ such that $i\in S$, there is $\max_{j\in S\setminus\{i\}}\{Y_i^-, X_j^-, D_S^-\} \geq \max_{j\in S}\{X_j^-, D_S^-\}$, which implies $C^-(Y_i, X_{-i})\geq C^-(X)$. 
    If $Y_i^-< X_i^-$, suppose $C^-(X)> C^-(Y_i, X_{-i})$. That is, for some $S^*$ such that $i\in S^*$, the deviation from $X_i$ to $Y_i$ decreases the term $\max_{j\in S^*}\{X_j, D_{S^*}\}$ and $C^-(X)$, and gives that $C^-(Y_i, X_{-i}) = \max_{j\in S^*\setminus\{i\}}\{Y_i, X_j, D_{S^*}\}$. This requires that $X_i^-$ is responsible for the maximum term. That is, $X_i^- = \max_{j\in S^*}\{X_j, D_{S^*}\}\geq C^-(X)$, where the second inequality is from the definition of $C^-(\cdot)$. This leads to a contradiction.

    Case 3: $C^-(X) < X_i^-$. We show that $C^-(X)\geq C^-(Y_i, X_{-i})$ for all $Y_i\in\calI$. If $Y_i^-\leq X_i^-$, each maximum term cannot be increased and thus $C^-(X)\geq C^-(Y_i, X_{-i})$ is obtained. If $Y_i^- > X_i^-$, consider the pivotal coalition $S^*$ such that $C^-(X) = \max_{j\in S^*}\{X_j, D_{S^*}\}$. Note that $i\notin S^*$. If so, i.e., $i\in S^*$, we get $C^-(X) = \max_{j\in S^*}\{X_j, D_{S^*}\}\geq X_i$, contradicting $C^-(X) < X_i^-$. Thus, the deviation from $X_i$ to $Y_i$ will not increase $C^-(X)$. Thus it proves $C^-(X)\geq C^-(Y_i, X_{-i})$.

    The claim for the rightmost alternative follows a symmetric argument. We conclude that the leftright scheme satisfies uncompromisingness and thus strategy-proofness.

    We argue the converse claim by induction. Statement $n$ is that every voting scheme $C:\calI^n\to\calI$ among $n$ agents satisfying strategy-proofness belongs to $\Sigma_n$. 

    The statement 1 is implied by Lemma~\ref{lem:ind11}.
    Assume statements $1, \cdots, n$ hold. We show the statement $n+1$ below.

    Let $C(X_0, X_1, \cdots, X_n)$ be the scheme among $n+1$ agents satisfying strategy-proofness. Fix $X_0$. From the statement $n$, we have $(X_1, \cdots, X_n)\mapsto C(X_0, X_1, \cdots, X_n)$ belongs to $\Sigma_n$. Note that here we abuse $N = \{1, \cdots, n\}$. Then there exists a collection of intervals $(D_S)_{S\subseteq N}$, satisfying $S\subseteq T$ implies $D_S\geq D_T$ for all $S,T\subseteq N$, such that
\begin{equation}
    C(X_0, X_1, \cdots, X_n)
    =
    \leftI_{S\subseteq N} 
    \left\{
        \rightI_{i\in S} \{X_i, D_S(X_0)\}
    \right\}.\label{eq:ind110}
\end{equation}

    Consider a non-empty coalition $S\subseteq N$. Let $X_i = \{A^-\}$ for all $i\in S$; and $X_i = \{A^+\}$ for all $i\notin S$. We have $\rightI_{i\in T} \{X_i, D_T(X_0)\} = D_T(X_0)$ for all $T\subseteq S$; and $\rightI_{i\in T} \{X_i, D_T(X_0)\} = \{A^+\}$, which doesn't affact the result of leftest interval operator, for all $T\subseteq N$ such that $T\setminus S\neq \varnothing$. Therefore (\ref{eq:ind110}) becomes
\begin{equation*}
    C(X_0, X_1, \cdots, X_n)
    =
    \leftI_{T\subseteq S} 
    \left\{
        \rightI_{i\in T} \{X_i, D_T(X_0)\}
    \right\}
    =
    \leftI_{T\subseteq S} \{D_T(X_0)\}
    =
    D_S(X_0).
\end{equation*}
    Similarly, we replace $X_0$ by $Y_0$, and get $C(Y_0, X_1, \cdots, X_n) = D_S(Y_0)$.

    Now fix $(X_1, \cdots, X_n)$. We know $X_0 \mapsto C(X_0, X_1, \cdots, X_n)$ satisfies strategy-proofness and thus uncompromisingness. It implies $C(X_0, X_1, \cdots, X_n)\in B(X_0, C(Y_0, X_1, \cdots, X_n))$, which is equivalent to $D_S(X_0)\in B(X_0, D_S(Y_0))$. That is, $D_S(\cdot)$ satisfies uncompromisingness, and thus belongs to $\Sigma_1$. Therefore we can write it as
\begin{equation}
    D_S(X_0) 
    = 
    \leftI\{
        E_{S,2}, \rightI\{X_0, E_{S,1}\}
    \}, \label{eq:ind111}
\end{equation}
    where $E_{S,1}\leq E_{S,2}$.

    Finally we derive the leftright scheme for agents $N\cup\{0\}$.
    Substituting (\ref{eq:ind111}) into (\ref{eq:ind110}), we have
\begin{equation}
\begin{split}
    C(X_0, X_1, \cdots, X_n)
    &=
    \leftI_{S\subseteq N} 
    \left\{
        \rightI_{i\in S} 
        \{
            X_i, \leftI\{ E_{S,2}, \rightI\{X_0, E_{S,1}\} \}
        \}
    \right\} 
    =
    \leftI_{S\subseteq N} 
    \left\{
        \leftI
        \left\{
            \rightI_{i\in S}\{X_i, E_{S,2}\}, \rightI_{i\in S}\{ X_i, \rightI\{X_0, E_{S,1}\} \}
        \right\}
    \right\} \\
    &=
    \leftI_{S\subseteq N} 
    \left\{
        \leftI
        \left\{
            \rightI_{i\in S}\{X_i, E_{S,2}\}, \rightI_{i\in S}\{ X_i, X_0, E_{S,1} \}
        \right\}
    \right\} 
    = 
    \leftI_{S\subseteq N} 
    \left\{
        \leftI
        \left\{
            \rightI_{i\in S}\{X_i, E_{S,2}\}, \rightI_{i\in S\cup\{0\}}\{X_i, E_{S,1}\}
        \right\}
    \right\},
    \label{eq:ind112}
\end{split}
\end{equation}
    where the second step is from the distributivity law of rightest operator over leftest operator by Lemma~\ref{lem:ind6}. 
    Let $F_S = E_{S,2}$ and $F_{S\cup\{0\}} = E_{S,1}$ for all $S\subseteq N$. We write (\ref{eq:ind112}) as
\begin{equation*}
\begin{split}
    C(X_0, X_1, \cdots, X_n)
    &=
    \leftI_{S\subseteq N} 
    \left\{
        \leftI
        \left\{
            \rightI_{i\in S}\{X_i, F_S\}, \rightI_{i\in S\cup\{0\}}\{X_i, F_{S\cup\{0\}}\}
        \right\}
    \right\}
    =
    \leftI_{S\subseteq N\cup\{0\}}
    \left\{
        \rightI_{i\in S}\{X_i, F_S\}
    \right\},
\end{split}
\end{equation*}
    as desired.
\end{proof}

\section{The (generalized) median schemes}\label{sec:ind4}

\subsection{Anonymity and the generalized median scheme}
Next, we introduce anonymity. We call $\pi: N\to N$ a permutation of $N$ if it is a bijection; and denote the permuted peak profile of agents $N$ by $X_\pi = (X_{\pi(i)})_{i\in N}$. 

\begin{axiom}[Anonymity]
    Fix $N$. For all $X\in\calI^N$ and any permutation $\pi$ of $N$, there is $C(X) = C(X_\pi)$.
\end{axiom}

This axiom says that the social choice is irrelevant to agents' identities. Anonymity is particularly important in voting mechanisms as it reveals the principle of ``one person, one vote.'' 

Adding anonymity, the leftright scheme reduces to the generalized median scheme. As shown in Example~\ref{ex:ind2}, anonymity requires the fixed ballots for all coalitions with the same cardinality to be the same, i.e., $D_S = D_{|S|}$ for all $S\subseteq N$. Then fixed ballots for coalitions $(D_S)_{S\subseteq N}$ is reduced to $n+1$ intervals $D_0, \cdots, D_n$. The generalized median scheme selects the median interval among peak profile and $n+1$ fixed ballots. 

The following theorem is also presented in \cite{miller2025}. We provide an alternative proof by induction.

\begin{theorem}\label{thm:ind1}
    The generalized median scheme is the only anonymous and strategy-proof voting scheme.
\end{theorem}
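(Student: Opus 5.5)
The proof has two directions. Sufficiency goes through Theorem~\ref{thm:ind5}; necessity goes by induction on $n$, mirroring the proof of Theorem~\ref{thm:ind5} but using Lemma~\ref{lem:ind4} in place of the distributivity law.

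\emph{Sufficiency.} Anonymity of $\medI(X_1,\cdots,X_n,D_1,\cdots,D_{n+1})$ is immediate, since the median is symmetric in the peaks. For strategy-proofness, I use Example~\ref{ex:ind2}: order the ballots coordinatewise, which Lemma~\ref{lem:ind5} keeps feasible, and set $D_S$ to be the ballot indexed by $|S|$. Coordinatewise, this is Moulin's identity between the generalized median with $n+1$ phantoms and the min--max formula with $d_S=d_{|S|}$. I will check that identity directly: $\min_S\max\{\max_{i\in S}x_i,d_{|S|}\}$ is attained by taking, for each $k$, the $k$ agents with smallest reports. Hence the scheme lies in $\Sigma_n$, and Theorem~\ref{thm:ind5} gives strategy-proofness.

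\emph{Necessity, by induction on $n$.} The statement for $n$ is: every anonymous and strategy-proof $C:\calI^n\to\calI$ lies in $S_n$. The base case $n=1$ is Lemma~\ref{lem:ind11}.

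For the step, take an anonymous, strategy-proof $C$ on agents $\{0,1,\cdots,n\}$. Fix $X_0$. The map $(X_1,\cdots,X_n)\mapsto C(X_0,X_1,\cdots,X_n)$ is strategy-proof, and it is anonymous under permutations of $\{1,\cdots,n\}$. By the induction hypothesis,
\[
C = \medI(X_1,\cdots,X_n,D_1(X_0),\cdots,D_{n+1}(X_0)),
\]
and after coordinatewise sorting (feasible by Lemma~\ref{lem:ind5}) I may take $D_1(X_0)\leq\cdots\leq D_{n+1}(X_0)$.

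Next I recover each ballot from $C$. Put $k-1$ of the agents $1,\cdots,n$ at $\{A^+\}$ and the remaining $n-k+1$ at $\{A^-\}$. Lemma~\ref{lem:ind3} with $k_1=k_2=k$ then gives $C=D_k(X_0)$. Applying uncompromisingness (Lemma~\ref{lem:ind1}) to agent $0$ in this profile shows that $D_k(\cdot)$ is a strategy-proof one-agent scheme. By Lemma~\ref{lem:ind11} and separability (Lemma~\ref{lem:ind10}), each coordinate is $D_k^\pm(X_0)=\med(X_0^\pm,a,b)$, where $a,b$ are the minimum and maximum of the range of $D_k^\pm$. Since $D_k^\pm$ is nondecreasing in $X_0^\pm$, these extremes are attained at $X_0=\{A^-\}$ and $X_0=\{A^+\}$ respectively. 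So
\[
D_k(X_0)=\medI\bigl(X_0,D_k(\{A^-\}),D_k(\{A^+\})\bigr).
\]

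\emph{Main obstacle.} To apply Lemma~\ref{lem:ind4} coordinatewise, I need a single nondecreasing sequence $V_1\leq\cdots\leq V_{n+2}$ with $D_k(X_0)=\medI(X_0,V_k,V_{k+1})$ for every $k$. This is where anonymity across agent $0$ enters. The profile defining $D_k(\{A^+\})$ has agent $0$ plus $k-1$ others at $\{A^+\}$ and everyone else at $\{A^-\}$. The profile defining $D_{k+1}(\{A^-\})$ has $k$ of the agents $1,\cdots,n$ at $\{A^+\}$, with agent $0$ and the rest at $\{A^-\}$. Both profiles have exactly $k$ agents at $\{A^+\}$, so they are permutations of each other and anonymity gives $D_k(\{A^+\})=D_{k+1}(\{A^-\})$.

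I therefore define $V_1=D_1(\{A^-\})$, $V_{k+1}=D_k(\{A^+\})=D_{k+1}(\{A^-\})$ for $k=1,\cdots,n$, and $V_{n+2}=D_{n+1}(\{A^+\})$. Each $V_j$ is a feasible interval. Monotonicity holds because $D_k(\{A^-\})\leq D_k(\{A^+\})$ for each $k$.

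Substituting $D_k(X_0)=\medI(X_0,V_k,V_{k+1})$ and applying Lemma~\ref{lem:ind4} separately to left and right endpoints yields
\[
C(X_0,\cdots,X_n)=\medI(X_0,X_1,\cdots,X_n,V_1,\cdots,V_{n+2}),
\]
which is a generalized median scheme for $n+1$ agents with $n+2$ fixed ballots. This completes the induction.
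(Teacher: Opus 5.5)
Your proof is correct, but it departs from the paper's in two places.

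\emph{Sufficiency.} The paper checks uncompromisingness directly. It sorts the left endpoints of $X_{-i}$ and of the ballots as $l_1\le\cdots\le l_{2n}$, writes $C^-(X)=\med(X_i^-,l_n,l_{n+1})$, and runs three cases. You instead embed $S_n$ into $\Sigma_n$ and invoke Theorem~\ref{thm:ind5}. That is shorter given the theorem, but it rests on Moulin's min--max identity, which you only sketch: reducing to $\min_k\max\{x_{(k)},d_k\}$ still leaves a short counting argument to identify this with the median. The indexing must also run the right way. $D_S$ should be the $(n+1-|S|)$-th smallest ballot, so that larger coalitions get lower phantoms. With the increasing convention, the $S=\varnothing$ term makes the formula collapse to the constant smallest ballot.

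\emph{Necessity.} The skeleton matches the paper's: fix $X_0$, sort the ballots, read them off the extreme profiles of Lemma~\ref{lem:ind3}, write each as $\medI(X_0,G_{k,1},G_{k,2})$, and finish with Lemma~\ref{lem:ind4}. The key step $G_{k,2}=G_{k+1,1}$, which in your notation is $D_k(\{A^+\})=D_{k+1}(\{A^-\})$, is argued differently.
\begin{itemize}
\item The paper proves it one coordinate at a time by two contradiction arguments. It places agents $0$ and $n$ at the singletons $\{G_{k+1,1}^-\}$ and $\{G_{k,2}^-\}$ (or the reverse) and swaps them. It also needs a separate step for $G_{k,1}\le G_{k+1,1}$.
\item You note that both quantities are values of $C$ at profiles built only from $\{A^-\}$ and $\{A^+\}$, each with exactly $k$ agents at $\{A^+\}$. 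A single application of anonymity then gives equality of whole intervals. The ordering $V_1\le\cdots\le V_{n+2}$ comes for free from $D_k(\{A^-\})\le D_k(\{A^+\})$.
\end{itemize}
Your version is shorter and more transparent. It also identifies the phantoms explicitly: $V_{j+1}$ is the value of $C$ at the profile with $j$ agents at $\{A^+\}$ and $n+1-j$ at $\{A^-\}$, for $j=0,\dots,n+1$.
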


\begin{proof}
    Since median interval operator is independent of agents' identities, anonymity is obvious. We show the strategy-proofness. 

    Fix $N$. Consider peaks $X\in\calI^N$ and fixed ballots $D_1, \cdots, D_{n+1}\in\calI$. For all $i\in N$, fix $X_{-i}$ and $D_1, \cdots, D_{n+1}$, and consider a deviation $Y_i\in\calI$. Note that $C$ is a generalized median rule, and the median interval operator aggregates leftmost and rightmost alternatives of $X$ and $D_1, \cdots, D_{n+1}$ separately. Thus in our analysis, the intervals can be decomposed into two ordered sequences. We arrange the leftmost alternatives of $X_{-i}$ and $D_1, \cdots, D_{n+1}$ as $l_1 \leq \cdots \leq l_{2n}$ and their rightmost alternatives as $r_1 \leq \cdots \leq r_{2n}$. 
    
    We show that $C^-(X)$ lies between $X_i^-$ and $C^-(Y_i, X_{-i})$.
    Note that
\begin{equation*}
    C^-(X)
    =
    \left\{\begin{array}{ll}
        l_n, & \text{if $X_i^-\leq l_n$}, \\
        X_i^-, & \text{if $l_n\leq X_i^-\leq l_{n+1}$}, \\
        l_{n+1}, & \text{if $X_i^-\geq l_{n+1}$}.
    \end{array}
    \right.
    =
    \med(X_i^-, l_n, l_{n+1}).
\end{equation*}
    Similarly, we have $C^-(Y_i, X_{-i}) = \med(Y_i^-, l_n, l_{n+1})$.
    
    Case 1: $X_i^-\leq l_n$. We have $C^-(X)=\med(X_i^-, l_n, l_{n+1}) = l_n\geq X_i^-$. Also, for all $Y_i^-\in A$, there is $C^-(Y_i, X_{-i}) = \med(Y_i^-, l_n, l_{n+1}) \geq l_n = C^-(X)$, as desired.

    Case 2: $X_i^-\geq l_{n+1}$. There is $C^-(X) = \med(X_i^-, l_n, l_{n+1}) = l_{n+1}\leq X_i^-$. For all $Y_i^-\in A$, we have $C^-(Y_i, X_{-i}) = \med(Y_i^-, l_n, l_{n+1}) \leq l_{n+1} = C^-(X)$, as desired.

    Case 3: $l_n\leq X_i^-\leq l_{n+1}$. We have $C^-(X)=\med(X_i^-, l_n, l_{n+1}) = X_i^-$, which satisfies the desired property regardless of the choice of $C^-(Y_i, X_{-i})$.

    By a similar argument, we get that $C^+(X)$ is between $X_i^+$ and $C^+(Y_i, X_{-i})$. Uncompromisingness is therefore satisfied. This establishes the strategy-proofness.
    
    Conversely, suppose the scheme satisfies the strategy-proofness and anonymity. 
    We argue that it is a generalized median scheme by induction. 
    
    The statement $n$ is that every voting scheme $C:\calI^n\to\calI$ among $n$ agents satisfying strategy-proofness and anonymity belongs to $S_n$. 
    
    The statement 1 is implied by Lemma~\ref{lem:ind11}. Assume statements $1, \cdots, n$ hold. We show the statement $n+1$.
    
    Let $C(X_0, X_1, \cdots, X_n)$ denote a voting scheme among $n+1$ agents satisfying strategy-proofness and anonymity. Fix $X_0$. We know that $(X_1, \cdots, X_n) \mapsto C(X_0; X_1, \cdots, X_n)$ also satisfies strategy-proofness and anonymity, and thus belongs to $S_n$ from the statement $n$. It says that there exist $n+1$ functions $E_1, \cdots, E_{n+1}: \calI\to\calI$ such that for all $(X_1, \cdots, X_n)\in\calI^n$, and all $X_0\in\calI$,
    $C(X_0, X_1, \cdots, X_n) = \medI(X_1, \cdots, X_n, E_1(X_0), \cdots, E_{n+1}(X_0))$

    We organize the leftmost and rightmost alternatives of the sequence $E_1(X_0), \cdots, E_{n+1}(X_0)$, and construct a new ordered sequence of fixed ballots. For all $X_0\in\calI$, we organize the sequence of leftmost alternatives $E_1^-(X_0), \cdots, E_{n+1}^-(X_0)$ as $l_1(X_0) \leq \cdots \leq l_{n+1}(X_0)$, and the sequence of rightmost alternatives $E_1^+(X_0), \cdots, E_{n+1}^+(X_0)$ as $r_1(X_0) \leq \cdots \leq r_{n+1}(X_0)$. Then we construct $F_i(X_0) = [l_i(X_0), r_i(X_0)]$ for all $i=1\cdots, n+1$. Lemma~\ref{lem:ind5} ensures that these intervals are feasible.
    Since median interval operator is invariant to the re-ordering of leftmost and rightmost alternatives, we have
\begin{equation}
    C(X_0,X_1, \cdots, X_n) =\medI(X_1, \cdots, X_n, F_1(X_0), \cdots, F_{n+1}(X_0)), \label{eq:ind102}
\end{equation}
    where $F_1(X_0) \leq \cdots \leq F_{n+1}(X_0)$.

    Fix $(X_1, \cdots, X_n)$. We know $X_0\mapsto C(X_0, X_1, \cdots, X_n)$ satisfies uncompromisingness with respect to $X_0$, and thus belongs to $S_1$ from the statement 1. For all $k=1, \cdots, n+1$, let $X_i = \{A^-\}$ for all $i = 1,\cdots, n-k+1$ if $k\leq n$; and $X_j = \{A^+\}$ for all $j = n-k+2, \cdots, n$ if $k\geq 2$. From Lemma~\ref{lem:ind3}, we get $C(X_0, X_1, \cdots, X_n) = F_k(X_0)$. Since it belongs to $S_1$, we write $F_k(X_0) = \medI(X_0, G_{k,1}, G_{k,2})$, where $G_{k,1}, G_{k,2}\in\calI$ such that $G_{k,1}\leq G_{k,2}$.
    Substituting $F_k(X_0)$ into (\ref{eq:ind102}) for all $k = 1, \cdots, n+1$, we obtain 
\begin{equation}
    C(X_0; X_1, \cdots, X_n) 
    = 
    \medI(X_1, \cdots, X_n, \medI(X_0, G_{1,1}, G_{1,2}), \cdots, \medI(X_0, G_{n+1,1}, G_{n+1,2})), \label{eq:ind104}
\end{equation}
    where $\medI(X_0, G_{1,1}, G_{1,2}) \leq \cdots \leq \medI(X_0, G_{n+1,1}, G_{n+1,2})$.

    We show that $G_{k,1}\leq G_{k+1,1}$ and $G_{k,2}\leq G_{k+1,2}$ for all $k=1, \cdots, n$. Note that for all $X_0\in\calI$, $F_k(X_0)\leq F_{k+1}(X_0)$ implies $\medI (X_0, G_{k,1}, G_{k,2}) \leq \medI (X_0, G_{k+1,1}, G_{k+1,2})$. Let $X_0 = \{A^-\}$.  
    Combining $G_{k,1}\leq G_{k,2}$ and $G_{k+1,1}\leq G_{k+1,2}$, we get $\medI (X_0, G_{k,1}, G_{k,2}) = G_{k,1}$ and $\medI (X_0, G_{k+1,1}, G_{k+1,2}) = G_{k+1,1}$, and thus $G_{k,1}\leq G_{k+1,1}$. Similarly, setting $X_0 = \{A^+\}$ obtains $G_{k,2}\leq G_{k+1,2}$.

    We claim that $G_{k,2} = G_{k+1,1}$, which is equivalent to $G_{k,2}^- = G_{k+1,1}^-$ and $G_{k,2}^+ = G_{k+1,1}^+$, for all $k=1, \cdots, n$. We show the left claim $G_{k,2}^- = G_{k+1,1}^-$. Note that the leftmost alternative of (\ref{eq:ind104}) can be represented as
\begin{equation}
\begin{split}
    C^-(X_0, X_1, \cdots, X_n) 
    &= 
    \med(X_1^-, \cdots, X_n^-, F_1^-(X_0), \cdots, F_{n+1}^-(X_0)) \\
    &= 
    \med(X_1^-, \cdots, X_n^-, \med(X_0^-, G_{1,1}^-, G_{1,2}^-), \cdots, \med(X_0^-, G_{n+1,1}^-, G_{n+1,2}^-)).\label{eq:ind105}
\end{split}
\end{equation}

    Suppose $G_{k,2}^- > G_{k+1,1}^-$ for some $k$. Consider a vector of intervals $(X_0, \cdots, X_n)$ such that $X_0 = \{G_{k+1,1}^-\}$, $X_i=\{A^-\}$ for all $i=1,\cdots,n-k$, $X_j= \{A^+\}$ for all $j=n-k+1,\cdots,n-1$, and $X_n = \{G_{k,2}^-\}$. We have $F_k^-(X_0) = \med(G_{k+1,1}^-, G_{k,1}^-, G_{k,2}^-) = G_{k+1,1}^-$ and $F_{k+1}^-(X_0) = \med(G_{k+1,1}^-, G_{k+1,1}^-, G_{k+1,2}^-) = G_{k+1,1}^-$. Thus $F_k^-(X_0) = F_{k+1}^-(X_0) = G_{k+1,1}^- = X_0^-$.
    Combining $F_1(X_0) \leq \cdots \leq F_{n+1}(X_0)$, we have $F_1^-(X_0) \leq \cdots \leq F_k^-(X_0) = X_0^- = F_{k+1}^-(X_0) \leq \cdots \leq F_{n+1}^-(X_0)$. Thus (\ref{eq:ind105}) becomes
\begin{equation*}
    C^-(X_0, X_1, \cdots, X_n) 
    = 
    \med \Bigg(\begin{array}{l}
        \overbrace{A^-, \cdots, A^-}^{n-k}, \overbrace{A^+, \cdots, A^+}^{k-1}, X_n^-, \\
        \underbrace{F_1^-(X_0), \cdots, F_{k-1}^-(X_0)}_{k-1}, X_0^-, X_0^-, \underbrace{F_{k+2}^-(X_0), \cdots, F_{n+1}^-(X_0)}_{n-k}
    \end{array}
    \Bigg) 
    =
    X_0^-.
\end{equation*}
    We consider a permutation only exchanging $0$ and $n$. There is $C^-(X_n, X_1, \cdots, X_{n-1}, X_0) = X_n^- > X_0^- = C^-(X_0, X_1, \cdots, X_n)$, which contradicts anonymity. This proves $G_{k,2}^- \leq G_{k+1,1}^-$ for all $k$. 
    
    Suppose $G_{k,2}^- < G_{k+1,1}^-$ for some $k$. Consider a vector of intervals $(X_0, \cdots, X_n)$ such that $X_0 = \{G_{k,2}^-\}$, $X_i=\{A^-\}$ for all $i=1,\cdots,n-k$, $X_j= \{A^+\}$ for all $j=n-k+1,\cdots,n-1$, and $X_n = \{G_{k+1,1}^-\}$. 
    We have $F_k^-(X_0) = \med(G_{k,2}^-, G_{k,1}^-, G_{k,2}^-) = G_{k,2}^- = X_0^-$ and $F_{k+1}^-(X_0) = \med(G_{k,2}^-, G_{k+1,1}^-, G_{k+1,2}^-) = G_{k+1,1}^- = X_n^-$. Combining $F_1(X_0) \leq \cdots \leq F_{n+1}(X_0)$, we have $F_1^-(X_0) \leq \cdots \leq F_k^-(X_0) = X_0^- = G_{k,2}^- < G_{k+1,1}^- = X_n^- = F_{k+1}^-(X_0) \leq \cdots \leq F_{n+1}^-(X_0)$. Then (\ref{eq:ind105}) implies
\begin{equation*}
    C^-(X_0, X_1, \cdots, X_n)
    = 
    \med \Bigg(\begin{array}{l}
        \overbrace{A^-, \cdots, A^-}^{n-k}, \overbrace{A^+, \cdots, A^+}^{k-1}, X_n^-, \\
        \underbrace{F_1^-(X_0), \cdots, F_{k-1}^-(X_0)}_{k-1}, X_0^-, X_n^-, \underbrace{F_{k+2}^-(X_0), \cdots, F_{n+1}^-(X_0)}_{n-k}
    \end{array}
    \Bigg) 
    =
    X_n^-.
\end{equation*}
    Consider a permutation exchanging $0$ and $n$. We get $C^-(X_n, X_1, \cdots, X_{n-1}, X_0) = X_0^- < X_n^- = C^-(X_0, X_1, \cdots, X_n)$, contradicting anonymity. Thus $G_{k,2}^- \geq G_{k+1,1}^-$ for all $k$. 

    Combining $G_{k,2}^- \geq G_{k+1,1}^-$ and $G_{k,2}^- \leq G_{k+1,1}^-$, we have shown the left claim $G_{k,2}^- = G_{k+1,1}^-$ for all $k$. The right claim $G_{k,2}^+ = G_{k+1,1}^+$ can be argued in a similar way. So $G_{k,2} = G_{k+1,1}$ for all $k=1, \cdots,n$. 
    Let $D_k = G_{k,1}$ for all $k = 1, \cdots, n+1$, and $D_{n+2} = G_{n+1,2}$. Rewrite (\ref{eq:ind104}) as
\begin{equation*}
    C(X_0, X_1, \cdots, X_n) 
    = 
    \medI(X_1, \cdots, X_n, \medI (X_0, D_1, D_2), \cdots, \medI (X_0, D_{n+1}, D_{n+2})), 
\end{equation*}
    where $D_1\leq\cdots\leq D_{n+2}$.

    Finally, we claim that for each profile $(X_0, X_1, \cdots, X_n)\in\calI^{n+1}$ and each sequence of intervals $D_1, \cdots, D_{n+2}\in\calI$ such that $D_1\leq \cdots \leq D_{n+2}$, 
\begin{equation*}
    \medI (X_1, \cdots, X_n, \medI (X_0, D_1, D_2), \cdots, \medI (X_0, D_{n+1}, D_{n+2})) 
    = 
    \medI (X_0, \cdots, X_n, D_1, \cdots, D_{n+2}).
\end{equation*}
    We decompose it into left and right claims. The left claim is that for each $(X_0, X_1, \cdots, X_n)\in\calI^{n+1}$ and each sequence $D_1, \cdots, D_{n+2}\in\calI$ such that $D_1\leq \cdots \leq D_{n+2}$, 
\begin{equation*}
    \med (X_1^-, \cdots, X_n^-, \med(X_0^-, D_1^-, D_2^-), \cdots, \med(X_0^-, D_{n+1}^-, D_{n+2}^-)) \\
    =
    \med(X_0^-, \cdots, X_n^-, D_1^-, \cdots, D_{n+2}^-). 
\end{equation*}
    Notice that $(X_1^-, \cdots, X_n^-)\in A^n$ is arbitrary and the sequence $D_1^-, \cdots, D_{n+2}^-\in A$ is non-decreasing. The above equation is directly derived from Lemma \ref{lem:ind4}. Note that the right claim can be similarly defined (by replacing all $[\cdot]^-$ with $[\cdot]^+$) and implied by Lemma \ref{lem:ind4}. The claim is thus proved.
\end{proof}

\subsection{Unanimity, ontoness, and the restricted generalized median scheme}

We next define unanimity and ontoness. 

\begin{axiom}[Unanimity]
    Fix $N$. If $X_i = V\in\calI$ for all $i\in N$, then $C(X) = V$.
\end{axiom}

\begin{axiom}[Ontoness]
    Fix $N$. For all $V\in\calI$, there exists $X\in\calI$ such that $C(X) = V$.
\end{axiom}

Unanimity says that the social choice must respect the unanimous agreement. Ontoness, also called voter sovereignty, requires that the voting scheme is a surjective function. These two axioms are closely related.

\begin{lemma}\label{lem:ind9}
    Unanimity implies ontoness; ontoness plus strategy-proofness implies unanimity.
\end{lemma}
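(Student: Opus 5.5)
The first claim is immediate. Given any $V\in\calI$, take the profile $X$ with $X_i=V$ for all $i\in N$. Unanimity gives $C(X)=V$, so every interval is attained and $C$ is onto. No further work is needed here.

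For the second claim, suppose $C$ is onto and strategy-proof, so it is uncompromising by Lemma~\ref{lem:ind1}. Fix $V\in\calI$. By ontoness there is a profile $Y\in\calI^N$ with $C(Y)=V$. The plan is to replace the agents' reports one at a time by $V$ and check that the outcome stays $V$. Define $Z^0=Y$, and let $Z^k$ be the profile obtained from $Z^{k-1}$ by changing agent $k$'s report to $V$, for $k=1,\dots,n$. Then $Z^n=(V,\dots,V)$, and it suffices to show $C(Z^k)=V$ by induction on $k$.

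For the inductive step, assume $C(Z^{k-1})=V$. The profile $Z^k$ gives agent $k$ the peak $V$, and $Z^{k-1}$ differs from it only in agent $k$'s report. Uncompromisingness applied at $Z^k$ with the deviation $Y_k$ yields
\[
C(Z^k)\in B(V, C(Z^{k-1}))=B(V,V).
\]
From the definition, $B(V,V)$ consists of intervals $W$ with $V^-\le W^-\le V^-$ and $V^+\le W^+\le V^+$, so $B(V,V)=\{V\}$. Hence $C(Z^k)=V$.

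After $n$ steps we obtain $C(V,\dots,V)=V$. Since $V$ was arbitrary, this is unanimity. I do not expect a real obstacle: the only points needing care are applying uncompromisingness with the truthful profile $Z^k$ as the base and the misreport $Y_k$ as the deviation, and noting that $B(V,V)$ is a singleton.
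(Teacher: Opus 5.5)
Your proposal is correct and follows essentially the same route as the paper: for the converse, you start from a profile mapped to $V$, replace the reports by $V$ one agent at a time, and at each step apply uncompromisingness together with $B(V,V)=\{V\}$. Your version simply makes explicit the induction and which profile serves as the base and which as the deviation, both of which the paper leaves implicit.
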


\begin{proof}
    Fix $N$. Suppose the voting scheme $C:\calI^N\to\calI$ satisfies unanimity. For all $V\in\calI$, we choose the peak profile $X = (V, \cdots, V)$, and by unanimity we obtain $C(X) = V$. Thus it verifies ontoness. 

    Suppose the voting scheme $C$ satisfies strategy-proofness and ontoness. For all $V\in\calI$, from ontoness, we can find a profile $X$ such that $C(X) = V$. For all $i\in N$ such that $X_i\neq V$, by strategy-proofness, we get $C(V, X_{-i})\in B(V, C(X)) = B(V, V) = \{V\}$, and thus $C(V, X_{-i}) = V$. Sequentially manipulating individual reports into $V$, we get $C(V, \cdots, V) = V$, as desired.
\end{proof}

Either unanimity or ontoness, combining anonymity and strategy-proofness, characterizes the restricted generalized median scheme.

\begin{theorem}\label{thm:ind4}
    The restricted generalized median scheme is the only anonymous and strategy-proof voting scheme satisfying either unanimity or ontoness.
\end{theorem}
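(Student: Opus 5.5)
By Lemma~\ref{lem:ind9}, unanimity implies ontoness, and ontoness together with strategy-proofness implies unanimity. So within the class of strategy-proof schemes the two axioms are equivalent, and it suffices to prove the statement with unanimity.

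For sufficiency, take a restricted generalized median scheme, written as $\medI(X_1,\cdots,X_n,\{A^-\},\{A^+\},D_2,\cdots,D_n)$. It belongs to $S_n$, so it is anonymous and strategy-proof by Theorem~\ref{thm:ind1}. For unanimity, suppose $X_i=V$ for all $i$. Coordinatewise, the $2n+1$ entries contain $n$ copies of $V^-$, at least one value $A^-\le V^-$ and at least one value $A^+\ge V^-$. Then at least $n+1$ entries are $\le V^-$, and likewise at least $n+1$ entries are $\ge V^-$. Hence the median is $V^-$, and symmetrically $V^+$ for the right endpoint, so $C(V,\cdots,V)=V$.

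For necessity, Theorem~\ref{thm:ind1} gives $C(X)=\medI(X,D_1,\cdots,D_{n+1})$. Reordering endpoints as in its proof and using Lemma~\ref{lem:ind5}, we may assume $D_1\le\cdots\le D_{n+1}$. The key step is to show that unanimity forces $D_1=\{A^-\}$ coordinatewise in the relevant sense, namely $D_1^-=A^-$ and $D_1^+=A^-$, and symmetrically $D_{n+1}^-=D_{n+1}^+=A^+$.

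To prove this, take the unanimous profile $X_i=\{A^-\}$ for all $i$. The left endpoint is $\med(A^-,\cdots,A^-,D_1^-,\cdots,D_{n+1}^-)$. Since $A^-$ appears only $n$ times among $2n+1$ entries, the median equals the smallest ballot endpoint $D_1^-$. Unanimity forces this to be $A^-$. The same computation on the right endpoints gives $D_1^+=A^-$, so $D_1=\{A^-\}$. The profile with all $X_i=\{A^+\}$ gives $D_{n+1}=\{A^+\}$ in the same way. Hence $C$ has the form defining a restricted generalized median scheme, with the remaining $n-1$ ballots $D_2,\cdots,D_n$ free.

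The main subtlety is the reordering step. The fixed ballots produced by Theorem~\ref{thm:ind1} need not be ordered, but sorting left and right endpoints separately leaves the median interval operator unchanged, and Lemma~\ref{lem:ind5} guarantees that the sorted pairs are still feasible intervals. After this, the remaining computations are direct median counts.
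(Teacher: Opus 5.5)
Your proof is correct and follows essentially the paper's route. You reduce to unanimity via Lemma~\ref{lem:ind9}, obtain $S_n$ from Theorem~\ref{thm:ind1}, and test unanimity at the profiles where everyone reports $\{A^-\}$ or $\{A^+\}$ to pin down the extreme fixed ballots. The only difference is cosmetic: you sort the ballots first (via Lemma~\ref{lem:ind5}) and compute the median directly, whereas the paper argues by contradiction that some ballot attains each extreme endpoint and sorts afterwards.
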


\begin{proof}
    
    Fix $n$. The restricted generalized median scheme among $n$ agents is a subclass of $S_n$, and thus it satisfies anonymity and strategy-proofness. 
    
    We show unanimity. Note that the scheme contains $n$ agents and $n-1$ fixed ballots, where agents are majority. It implies that $\min_{i\in N} X_i^- \leq C^-(X) \leq \max_{i\in N} X_i^-$ and $\min_{i\in N} X_i^+ \leq C^+(X) \leq \max_{i\in N} X_i^+$ for all $X\in\calI^N$. If $X_i = V\in\calI$ for all $i\in N$, we get $\min_{i\in N} X_i^- =\max_{i\in N} X_i^- = V^-$ and $\min_{i\in N} X_i^+ =\max_{i\in N} X_i^+ = V^+$. Thus $C^-(X) = V^-$ and $C^+(X) = V^+$. That is, $C(X) = V$, as desired.

    Note that, by Lemma~\ref{lem:ind9}, unanimity and ontoness are equivalent under strategy-proofness. It also verifies the ontoness.

    Conversely, we show that strategy-proofness, anonymity, and unanimity implies the restricted generalized median scheme.
    
    Fix $N$. Suppose a voting scheme satisfies strategy-proofness, anonymity, and unanimity. From Theorem \ref{thm:ind1}, strategy-proofness and anonymity imply the generalized median interval voting scheme, i.e., $S_n$. 
    
    We show that at least one fixed ballot has a leftmost alternative $A^-$.
    Suppose no fixed ballot's leftmost alternative is $A^-$. That is, for each fixed ballot $D_j\in\calI$, there is $D_j^- > A^-$, where $j = 1, \cdots, n+1$. Consider a peak profile $X$ such that $X_i = \{A^-\}$ for all $i\in N$. Since the fixed ballots form a majority, we know $C^-(X) \geq \min_{j=1, \cdots, n+1} D_j^- > A^-$. Therefore $C(X)\neq \{A^-\}$, contradicting unanimity. 

    We show that at least one fixed ballot has leftmost alternative $A^+$.
    Suppose there is no such fixed ballot, i.e., $D_j^- < A^+$ for each fixed ballot $D_j\in\calI$, where $i = 1, \cdots, n+1$. Consider a peak profile $X$ with $X_i = \{A^+\}$ for all $i\in N$. Since the fixed ballots form a majority, we know $C^-(X)<A^+$. But unanimity implies $C(X) = \{A^+\}$, which leads to a contradiction.

    Similarly, we obtain the same claim for the rightmost alternatives of fixed ballots. Since $S_n$ has a median structure, we can organize the leftmost and rightmost alternatives of fixed ballots and obtain two fixed ballots $\{A^-\}$ and $\{A^+\}$, as desired. 
\end{proof}

\subsection{Efficiency and the median scheme}

Now we introduce efficiency and the tools used to analyze it. 

\begin{axiom}[Efficiency]
    Fix $N$. For all $\succsim\in\mathcal{P}^N$ with peaks $X\in\calI^N$, there is no interval $V\in \calI$ such that $V \succ_i C(X)$ for some $i\in N$, and $V \succsim_j C(X)$ for all $j\in N$.
\end{axiom}

Efficiency requires the social choice to be Pareto optimal. That is, there is no choice of interval that benefits at least one agent and doesn't harm any agent.

\begin{lemma}\label{lem:ind12}
    Efficiency implies unanimity and ontoness.
\end{lemma}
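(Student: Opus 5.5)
By Lemma~\ref{lem:ind9}, unanimity already implies ontoness, and that step needs no strategy-proofness. So it suffices to show that efficiency implies unanimity.

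Fix $N$ and $V\in\calI$, and consider a profile in which $X_i = V$ for all $i\in N$. Suppose, towards a contradiction, that $C(X) = W \neq V$ for an efficient scheme $C$. I will exhibit a Pareto improvement, namely the common peak $V$ itself. Efficiency is stated for every preference profile $\succsim\in\mathcal{P}^N$ with peaks $X$, so I may take any single-peaked preferences with peak $V$. Condition (a) of single-peakedness then gives $V\succ_i W$ for every $i\in N$, because $V$ is the unique peak of each agent and $W\neq V$.

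It follows that $V$ strictly benefits some agent, indeed every agent, and harms nobody ($V\succsim_j W$ for all $j$). This contradicts efficiency at $C(X)=W$. Hence $C(V,\dots,V)=V$, which is unanimity.

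Ontoness then follows from the first part of Lemma~\ref{lem:ind9}: for each $V$, the profile $(V,\dots,V)$ attains it. Alternatively, it follows directly, since the unanimous profile maps to $V$. There is no real obstacle here. The only point to check is that efficiency quantifies over the actual preferences, so that at least one preference profile in $\mathcal{P}^N$ with the given peaks exists. This is immediate, because $\mathcal{P}$ is nonempty for every peak, for instance the $L_2$ utility of the earlier example.
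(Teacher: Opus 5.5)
Your proof is correct and follows the paper's argument. At the unanimous profile $(V,\dots,V)$, condition (a) of single-peakedness makes every agent strictly prefer $V$ to any other outcome, which contradicts efficiency, and ontoness then follows from the first half of Lemma~\ref{lem:ind9}; your remark that each peak admits some preference in $\mathcal{P}$, so the efficiency quantifier is not vacuous, is a detail the paper leaves implicit.
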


\begin{proof}
    We show efficiency implies unanimity. Fix $N$. Consider a profile $X = (V, \cdots, V)$, and any efficient voting scheme $C:\calI^N\to\calI$. Suppose $C(V, \cdots, V) \neq V$. Then for all $i\in N$, by single-peakedness of preference, there is $V\succ_i C(X)$, contradicting efficiency. So unanimity must hold.
    From Lemma~\ref{lem:ind9}, ontoness is then implied. 
\end{proof}

In order to introduce another geometric property, we use the following terminologies. 
Denote by $q_t(X)$ the set of feasible intervals in the $t$-th quadrant centered at $X\in\calI$; and by $\hat{q}_t(X)$ those in the $t$-th extended quadrant, i.e., the $t$-th quadrant along with adjacent semi-axes. For example, $q_1(X) = \{Y\in \calI\mid Y^- > X^-, Y^+ > X^+\}$, and $\hat{q}_1(X) = \{Y\in \calI\mid Y^- \geq X^-, Y^+ \geq X^+\}$. We provide an illustration of the (extended) quadrant in Figure~\ref{fig:ind1}.
We use $t+$ to represent the index of the quadrant opposite to $q_t$.

\begin{figure}
    \centering
\begin{subfigure}[t]{0.45\textwidth}
\centering
\begin{tikzpicture}[scale=1] 
    \draw[dashed] (-2,0) -- (2,0) node[right] {$l$}; 
    \draw[dashed] (0,-2) -- (0,2) node[above] {$r$}; 

    \fill (0,0) circle (2pt); \node[above left] at (0,0) {$X$}; 

    \node at (1.2, 1.2) {$q_1(X)$}; 
    \node at (-1.2, 1.2) {$q_2(X)$}; 
    \node at (-1.2, -1.2) {$q_3(X)$}; 
    \node at (1.2, -1.2) {$q_4(X)$}; 

\end{tikzpicture}
\subcaption{four quadrants}
\end{subfigure}
\begin{subfigure}[t]{0.45\textwidth}
\centering
\begin{tikzpicture}[scale=1] 
    \fill[ 
        pattern = {Lines[angle=45, distance=10pt, line width=0.4pt]}, 
        pattern color = black 
    ] (0,0) rectangle (2,2);

    \draw[dashed] (-2,0) -- (0,0);
    \draw[-, thick] (0,0) -- (2,0) node[right] {$l$}; 
    \draw[dashed] (0,-2) -- (0,0);
    \draw[-, thick] (0,0) -- (0,2) node[above] {$r$}; 

    \fill (0,0) circle (2pt); \node[above left] at (0,0) {$X$}; 

    \node[ fill=white, draw=none, inner xsep=12pt, inner ysep=6pt ] at (1.2, 1.2) {};
    \node at (1.2, 1.2) {$\hat{q}_1(X)$}; 
\end{tikzpicture}
\subcaption{1-th extended quadrant}
\end{subfigure}
    \caption{Quadrant and extended quadrant.}
    \label{fig:ind1}
\end{figure}

Motivated by \cite{block2010}, we formally propose the following geometric property.

\begin{axiom}[Quad-centrality]
    Fix $N$. For all $X\in\calI^N$ and all $t \in[4]$, there exists $i\in N$ such that $X_i\in\hat{q}_t(C(X))$.
\end{axiom}

This property says that there exists at least one peak in each extended quadrant of the social choice. 
Figure~\ref{fig:ind2} provides several examples of choices satisfying quad-centrality. Note that this property may be satisfied by less than 4 peaks.

It provides a sufficient condition for efficiency.

\begin{figure}
    \centering
\begin{subfigure}[t]{0.24\textwidth}
\centering
\begin{tikzpicture}[scale=1] 
    \draw[dashed] (-1.5,0) -- (1.5,0) node[right] {}; 
    \draw[dashed] (0,-1.5) -- (0,1.5) node[above] {$r$}; 

    \fill (0,0) circle (2pt); \node[below right] at (0,0) {$C(X)$}; 

    \fill (0.8, 0.6) circle (2pt); \node[above right] at (0.8, 0.6) {$X_1$}; 
    \fill (-0.7,0.9) circle (2pt); \node[above left] at (-0.7,0.9) {$X_2$}; 
    \fill (-0.9, -0.3) circle (2pt); \node[below left] at (-0.9, -0.3) {$X_3$}; 
    \fill (1,-0.9) circle (2pt); \node[below right] at (1,-0.9) {$X_4$}; 
\end{tikzpicture}
\subcaption{}
\end{subfigure}
\begin{subfigure}[t]{0.25\textwidth}
\centering
\begin{tikzpicture}[scale=1] 
    \draw[dashed] (-1.5,0) -- (1.5,0) node[right] {}; 
    \draw[dashed] (0,-1.5) -- (0,1.5) node[above] {$r$}; 

    \fill (0,0) circle (2pt); \node[below right] at (0,0) {$C(X)$}; 

    \fill (0.8, 0.9) circle (2pt); \node[above right] at (0.8, 0.9) {$X_1$}; 
    \fill (-1,0.6) circle (2pt); \node[above left] at (-1,0.6) {$X_2$}; 
    \fill (0, -0.6) circle (2pt); \node[below left] at (0, -0.6) {$X_3$}; 
\end{tikzpicture}
\subcaption{}
\end{subfigure}
\begin{subfigure}[t]{0.25\textwidth}
\centering
\begin{tikzpicture}[scale=1] 
    \draw[dashed] (-1.5,0) -- (1.5,0) node[right] {}; 
    \draw[dashed] (0,-1.5) -- (0,1.5) node[above] {$r$}; 

    \fill (0,0) circle (2pt); \node[below right] at (0,0) {$C(X)$}; 

    \fill (0, 0.8) circle (2pt); \node[above right] at (0, 0.8) {$X_1$}; 
    \fill (0, -0.9) circle (2pt); \node[below left] at (0, -0.9) {$X_2$}; 
\end{tikzpicture}
\subcaption{}
\end{subfigure}
\begin{subfigure}[t]{0.24\textwidth}
\centering
\begin{tikzpicture}[scale=1] 
    \draw[dashed] (-1.5,0) -- (1.5,0) node[right] {$l$}; 
    \draw[dashed] (0,-1.5) -- (0,1.5) node[above] {$r$}; 

    \fill (0,0) circle (2pt); \node[below right] at (0,0) {$C(X)=X_1$}; 
\end{tikzpicture}
\subcaption{}
\end{subfigure}

    \caption{Examples of social choice satisfying quad-centrality.}
    \label{fig:ind2}
\end{figure}

\begin{lemma}\label{lem:ind17}
    Quad-centrality implies efficiency.
\end{lemma}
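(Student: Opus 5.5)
We argue by contradiction, working directly with the single-peakedness condition (b) in the two-dimensional representation. Fix a profile $\succsim\in\mathcal{P}^N$ with peaks $X\in\calI^N$, and write $C=C(X)$. Suppose quad-centrality holds at $X$, yet some $V\in\calI$ Pareto dominates $C$: $V\succsim_j C$ for all $j\in N$ and $V\succ_i C$ for some $i$. In particular $V\neq C$. The goal is to exhibit an agent $j$ with $C\succ_j V$, which contradicts $V\succsim_j C$.

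The key step is to choose the quadrant according to the position of $V$ relative to $C$. Since $V\neq C$, the vector $(V^- - C^-, V^+ - C^+)$ is nonzero. Pick $t\in[4]$ such that $V$ lies in the extended quadrant $\hat{q}_t(C)$. Now take the opposite extended quadrant $\hat{q}_{t+}(C)$; quad-centrality gives an agent $j$ with $X_j\in\hat{q}_{t+}(C)$. Concretely, say $V^-\geq C^-$ and $V^+\geq C^+$. We then take $t+=3$, so $X_j^-\leq C^-$ and $X_j^+\leq C^+$. The other sign patterns are handled by the symmetric choice, mixing $\leq$ and $\geq$ coordinatewise.

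In every case each coordinate of $C$ lies weakly between the corresponding coordinates of $X_j$ and $V$:
\begin{equation*}
X_j^-\leq C^-\leq V^- \quad\text{and}\quad X_j^+\leq C^+\leq V^+ .
\end{equation*}
Hence $C\in B(X_j,V)$, and since $C\neq V$, single-peakedness (b) gives $C\succ_j V$. This contradicts $V\succsim_j C$, so no Pareto improvement exists and $C(X)$ is efficient.

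The only delicate point is the boundary case where $V$ differs from $C$ in just one coordinate, so that $V$ lies on a semi-axis and belongs to two extended quadrants. Taking the opposite extended quadrant of either one still gives the weak betweenness we need, so this causes no problem. Note also that the argument never uses the strictly preferring agent $i$; we only need $V\neq C$. Feasibility of $V$ and $C$ is automatic, since both lie in $\calI$.
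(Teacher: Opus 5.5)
Your proof is correct and follows the paper's argument. You place $V$ in some extended quadrant $\hat{q}_t(C(X))$ and use quad-centrality on the opposite extended quadrant to find a peak $X_j$ with $C(X)\in B(X_j,V)$. Single-peakedness then gives $C(X)\succ_j V$; you just spell out the opposite-quadrant step that the paper leaves implicit.
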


\begin{proof}
    Fix $N$. Suppose a voting scheme $C:\calI^N\to\calI$ satisfies quad-centrality. That is, for peaks $X\in\calI^N$ and all $t \in[4]$, there exists $i\in N$ such that $X_i\in\hat{q}_t(C(X))$. Suppose to the contrary there exists an interval $V\neq C(X)$ such that $V \succ_j C(X)$ for some $j\in N$, and $V \succsim_k C(X)$ for all $k\in N$. Consider a deviation $V\in \hat{q}_t(C(X))$ for some $t$. Then there exists $k\in N$ such that $C(X)\in B(X_k, V)$. By single-peakedness, it implies $C(X)\succ_k V$. It yields the desired contradiction, and thus verifies efficiency.
\end{proof}

We also provide the following lemma as a tool to construct a contradiction to efficiency.

\begin{lemma}\label{lem:ind14}
    Fix $N$ and peaks $X\in\calI^N$. Suppose for some $t$, there is $X_i\notin \hat{q}_t(C(X))$ for all $i\in N$, and $q_{t+}(C(X))\neq \varnothing$. Then we can find an interval $V\in q_{t+}(C(X))$ and a preference profile $\succsim\in\mathcal{P}^N$ with peaks $X$ such that $V\succ_i C(X)$ for all $i\in N$.
\end{lemma}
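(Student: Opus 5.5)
Write $C=C(X)$. The hypothesis says every peak $X_i$ lies outside the closed quadrant $\hat q_t(C)$; the conclusion asks for a point $V$ in the opposite open quadrant $q_{t+}(C)$ that every agent strictly prefers to $C$. We argue for $t=1$; the other three cases follow by reflecting the coordinates $(l,r)$ and reversing the inequalities. For $t=1$, no peak has both $X_i^-\ge C^-$ and $X_i^+\ge C^+$. Hence each agent lies in one of three groups:
\begin{itemize}
\item type $a$: $X_i^-<C^-$ and $X_i^+<C^+$, i.e.\ $X_i\in q_3(C)$;
\item type $b$: $X_i^-<C^-$ and $X_i^+\ge C^+$;
\item type $c$: $X_i^-\ge C^-$ and $X_i^+<C^+$.
\end{itemize}

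\textbf{Choice of $V$.} Since $q_3(C)\neq\varnothing$, we have $C^->A^-$. Take $V=[C^--\delta, C^+-\delta]$ with $\delta>0$ small. Then $V^-=C^--\delta\ge A^-$ and $V^-\le V^+$, so $V$ is feasible and lies in $q_3(C)$. Shrink $\delta$ further so that $\delta<C^--X_i^-$ for every agent of type $a$ or $b$, and $\delta<C^+-X_i^+$ for every agent of type $a$ or $c$. These are finitely many strict inequalities, so such a $\delta$ exists.

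\textbf{Type $a$ agents.} For such an agent, $X_i^-\le V^-<C^-$ and $X_i^+\le V^+<C^+$, so $V\in B(X_i,C)$ and $V\neq C$. Single-peakedness then gives $V\succ_i C$ for every preference in $\mathcal{P}$ with peak $X_i$, so any choice of preference works.

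\textbf{Types $b$ and $c$.} Here $V$ moves one coordinate toward the peak and the other away from it, so we must construct preferences. This is where the richness condition is used. For a type $b$ agent we need $C\notin B(X_i,V)$. Since $X_i^-<V^-<C^-$, the left coordinate $C^-$ lies outside $[X_i^-,V^-]$, so $C\notin B(X_i,V)$. The richness condition then yields a preference with peak $X_i$ and $V\succ_i C$. Explicitly, one can take a weighted $L_1$ utility $u(W)=-\alpha|W^--X_i^-|-|W^+-X_i^+|$ with $\alpha>1$. Then
\[
u(V)-u(C)=\alpha\delta-\bigl(|C^+-\delta-X_i^+|-|C^+-X_i^+|\bigr)\ge\alpha\delta-\delta>0,
\]
where the middle bound is the triangle inequality. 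A type $c$ agent is handled symmetrically, using $X_i^+<V^+<C^+$ and putting the large weight on the right coordinate.

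\textbf{Conclusion and main obstacle.} Assembling these preferences gives a profile $\succsim\in\mathcal{P}^N$ with peaks $X$ under which $V\succ_i C$ for all $i\in N$, as required. The main obstacle is choosing a single $V$ that simultaneously moves strictly toward every agent's peak in the coordinate where that agent lies strictly on the far side. The uniform diagonal shift by a sufficiently small $\delta$ resolves this, and the quantifier order (first fix $V$, then build each preference) is what makes the richness condition applicable agent by agent.
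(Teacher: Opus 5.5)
Your proof is correct. It has the same skeleton as the paper's: fix one $V$ in the opposite open quadrant, then choose each agent's preference separately.

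The paper, however, uses a single observation that makes your case split and your choice of $\delta$ unnecessary. Any $V\in q_{t+}(C(X))$ lies strictly on one side of $C(X)$ in both coordinates. So $C(X)\in B(X_i,V)$ would force $X_i\in\hat q_t(C(X))$. The hypothesis therefore gives $C(X)\notin B(X_i,V)$ for every $i$ and every such $V$, and the richness condition applies to all agents uniformly.

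What your route buys is self-containment. Because you take $V$ close to $C(X)$, type-$a$ agents are handled by single-peakedness alone, for every preference with that peak. Types $b$ and $c$ are handled by an explicit weighted $L_1$ utility. So your argument stands without the general richness condition, whose verification in the paper is only sketched.

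The smallness of $\delta$ is genuinely needed for that explicit computation. If $V^-$ overshoots $X_i^-$, the left-coordinate gain is no longer $\alpha\delta$ and can be outweighed by the loss on the right. The ``main obstacle'' you identify is therefore an artifact of using symmetric $L_1$ utilities, not a difficulty of the lemma itself.

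One minor correction concerns your reduction to $t=1$. Reflecting a single coordinate does not preserve the feasible triangle $\{l\le r\}$. The natural symmetry $[l,r]\mapsto[-r,-l]$ (after centering $A$) swaps $t=1$ and $t=3$ but fixes $t=2$ and $t=4$, so those two cases are not reflections of $t=1$. Your construction adapts directly, though:
\begin{itemize}
\item for $t=2$, use $V=[C^-(X)+\delta,\, C^+(X)-\delta]$;
\item for $t=4$, use $V=[C^-(X)-\delta,\, C^+(X)+\delta]$.
\end{itemize}
These are feasible for small $\delta$ precisely because $q_{t+}(C(X))\neq\varnothing$. For $t=2$ this gives $C^-(X)<C^+(X)$; for $t=4$ it gives $A^-<C^-(X)$ and $C^+(X)<A^+$.
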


\begin{proof}
    Suppose, without loss of generality, there is no peak in $\hat{q}_1(C(X))$, and consider a deviation $V\in q_3(C(X))$. Notice that the social choice $C(X)$ is never inside the betweenness set of the deviation $V$, satisfying $V^- < C^-(X)$ and $V^+ < C^+(X)$, and each peak $X_i$, which has $X_i^- < C^-(X)$ or $X_i^+ < C^+(X)$. Figure~\ref{fig:ind3} provides an illustration of betweenness sets. Then, by richness condition, we can find a preference profile such that all agents prefer the deviation $V$ to the choice $C(X)$. 
\end{proof}

Using this lemma, we can eliminate all inefficient schemes within the class of restricted generalized median schemes in the following two steps.

\begin{figure}
    \centering
\begin{tikzpicture}[scale=1] 
    \fill[ 
        pattern = {Lines[angle=45, distance=10pt, line width=0.4pt]}, 
        pattern color = black 
    ] (0,0) rectangle (2,2);

    \node[ fill=white, draw=none, inner xsep=22pt, inner ysep=7pt ] at (1, 1) {};
    \node at (1, 1) {$\hat{q}_1(C(X))$}; 

    \draw[dashed] (-2,0) -- (2,0) node[right] {$l$}; 
    \draw[dashed] (0,-2) -- (0,2) node[above] {$r$}; 

    \fill (0,0) circle (2pt); \node[above left] at (0,0) {$C(X)$}; 

    \coordinate (V) at (-1,-1);
    \fill (-1,-1) circle (2pt); \node[below right] at (-1,-1) {$V$}; 

    \coordinate (X1) at (-1.5,1.8);
    \fill (-1.5,1.8) circle (2pt); \node[above]  at (-1.5,1.8) {$X_1$};
    \coordinate (X2) at (-1.8,-1.5);
    \fill (-1.8,-1.5) circle (2pt); \node[left] at (-1.8,-1.5) {$X_2$}; 
    \coordinate (X3) at (1.5,-0.5);
    \fill (1.5,-0.5) circle (2pt); \node[ right] at (1.5,-0.5) {$X_3$}; 

    \node[ draw, dashed,  inner sep=0pt, fit=(V)(X1) ] {};
    \node[ draw, dashed,  inner sep=0pt, fit=(V)(X2) ] {};
    \node[ draw, dashed,  inner sep=0pt, fit=(V)(X3) ] {};

\end{tikzpicture}
    \caption{Betweenness sets of peaks $X$ out of $\hat{q}_1$ and a deviation $V$ in $q_3$.}
    \label{fig:ind3}
\end{figure}

\begin{lemma}\label{lem:ind16}
    A voting scheme satisfying efficiency, anonymity, and strategy-proofness belongs to the rank-based scheme.
\end{lemma}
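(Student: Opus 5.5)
By Lemma~\ref{lem:ind12}, efficiency implies unanimity. Theorem~\ref{thm:ind4} then tells us that $C$ is a restricted generalized median scheme: $C(X)=\medI(X_1,\dots,X_n,D_1,\dots,D_{n+1})$ with $D_1=\{A^-\}$, $D_{n+1}=\{A^+\}$ and $D_1\le\cdots\le D_{n+1}$ (the ordering is obtained as in the proof of Theorem~\ref{thm:ind1}, using Lemma~\ref{lem:ind5}). The rank-based scheme $\rankI_\gamma$ is exactly the case where every fixed ballot is $\{A^-\}$ or $\{A^+\}$. So the remaining task is to show that efficiency forces each intermediate ballot $D_2,\dots,D_n$ to be one of these two points. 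I would state the conclusion as: there is $\gamma$ such that $D_j=\{A^-\}$ for $j\le n+1-\gamma$ and $D_j=\{A^+\}$ for the rest.

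I would argue by contradiction using Lemma~\ref{lem:ind14}. Suppose some intermediate ballot, say $D_k$, is neither $\{A^-\}$ nor $\{A^+\}$.

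\emph{Step 1: force $C(X)=D_k$.} Choose a profile in which $n-k+1$ agents report $\{A^-\}$ and $k-1$ agents report $\{A^+\}$. In the combined list of $2n+1$ ballots, the left endpoints become $A^-$ ($n-k+1+j_-$ times), followed by the middle ballots' left endpoints, followed by $A^+$. Here $j_-$ and $j_+$ denote the numbers of fixed ballots equal to $\{A^-\}$ and $\{A^+\}$. The median counting is the same as in Lemma~\ref{lem:ind3}.

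Rather than fight the counting, I would build the violating profile directly. Put a few agents at $\{A^-\}$ or $\{A^+\}$ so that the median lands on the leftmost and rightmost endpoints of some non-extreme ballot $D$, giving $C(X)=D$. Every peak then lies in $\{\{A^-\},\{A^+\}\}$. For $C(X)=D=[d^-,d^+]$, the corner $\{A^-\}=[A^-,A^-]$ lies in $\hat q_3(D)$, and $\{A^+\}$ lies in $\hat q_1(D)$.

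\emph{Step 2: find an empty extended quadrant.} I would look at the extended quadrant $\hat q_2(D)=\{Y:Y^-\le d^-,\ Y^+\ge d^+\}$. It contains $\{A^-\}$ only if $d^+=A^-$, i.e.\ $D=\{A^-\}$, and it contains $\{A^+\}$ only if $d^-=A^+$, i.e.\ $D=\{A^+\}$. Since $D$ is neither, no peak lies in $\hat q_2(D)$.

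\emph{Step 3: apply Lemma~\ref{lem:ind14}.} That lemma requires the opposite open quadrant $q_4(D)=\{Y:Y^->d^-,\ Y^+<d^+\}$ to contain a feasible interval, which holds exactly when $d^-<d^+$. Lemma~\ref{lem:ind14} then yields a strict Pareto improvement, contradicting efficiency.

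The degenerate case $d^-=d^+$ with $A^-<d^-<A^+$ needs separate treatment, and I expect it to be the main obstacle. There $q_4$ is empty (a point on the diagonal), so I would change the profile. I would replace one extreme peak by a suitable interval, for example an agent with peak $A=[A^-,A^+]$, or perturb one agent, while keeping the outcome equal to $\{d\}$ or moving it to a nondegenerate interval. Then I would rerun the quadrant argument with $\hat q_1$ or $\hat q_3$ empty, using $q_3$ or $q_1$, which are nonempty at interior diagonal points.

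Finally, I must check that the counting in Step 1 really makes every non-extreme ballot attainable. With $n$ agents split between $\{A^-\}$ and $\{A^+\}$, Lemma~\ref{lem:ind3} (applied with the one-point ballots absorbed) shows that $C(X)=[V_{k_1}^-,V_{k_2}^+]$ for all allowed $k_1\le k_2$. Taking $k_1=k_2=k$ gives $C(X)=D_k$ for every fixed ballot $D_k$, so each must be extreme.
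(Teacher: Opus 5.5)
Your reduction to the restricted generalized median form, the profile that forces $C(X)=D_k$ via Lemma~\ref{lem:ind3}, and the nondegenerate case are all correct and match the paper's Case~2. In that case ($d^-<d^+$) there is no peak in $\hat q_2(D_k)$, and $[a,a]\in q_4(D_k)$ by density.

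The gap is the degenerate case $D_k=\{d\}$ with $A^-<d<A^+$. You leave it as a sketch, and the sketch points in a direction that cannot work. Under a scheme whose fixed ballots include $\{A^-\}$ and $\{A^+\}$, any profile with $C(X)=\{d\}$ must have a peak in both $\hat q_1(\{d\})$ and $\hat q_3(\{d\})$. To see this for $\hat q_3$, suppose no peak lies in $\hat q_3(\{d\})$. Since $X_i^-\le X_i^+$, this means every $X_i^+>d$. Together with the ballot $\{A^+\}$, at least $n+1$ of the $2n+1$ right endpoints then exceed $d$, which forces $C^+(X)>d$. The argument for $\hat q_1$ is symmetric, using $\{A^-\}$ and left endpoints. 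So your plan of keeping the outcome equal to $\{d\}$ and rerunning the argument with $\hat q_1$ or $\hat q_3$ peak-free is impossible. Your alternative of moving to some nondegenerate outcome is never specified.

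The missing observation is that you do not need to change the profile at all; you only need the other pair of opposite quadrants. At a diagonal point, $\hat q_4(\{d\})=\{Y\in\calI: Y^-\ge d,\ Y^+\le d\}=\{\{d\}\}$, because feasibility forces $Y^-\le Y^+$. No peak equals $\{d\}$, since every peak is $\{A^-\}$ or $\{A^+\}$. Meanwhile $A\in q_2(\{d\})$ because $A^-<d<A^+$. Lemma~\ref{lem:ind14} with $t=4$ then yields a strict Pareto improvement to $A$, exactly as in the paper's Case~1. With this observation added, your proof is complete.
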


\begin{proof}
    Fix $N$. Suppose a voting scheme $C:\calI^N\to\calI$ satisfies efficiency, anonymity, and strategy-proofness. By Lemma~\ref{lem:ind12}, efficiency implies unanimity. Then, from Theorem~\ref{thm:ind4}, we get $C$ is a restricted generalized median scheme. Using efficiency, we show that all fixed ballots of $C$ are either $\{A^-\}$ or $\{A^+\}$. 

    Let $D_1, \cdots, D_{n-1}$ denote fixed ballots. Without any loss, we assume $D_1\leq\cdots\leq D_{n-1}$. Suppose $D_k\notin\{\{A^-\}, \{A^+\}\}$ for some $k$. Let $X_i = \{A^-\}$ for all $i = 1, \cdots, n-k$, and $X_j = \{A^+\}$ for all $j = n-k+1, \cdots, n$. From Lemma~\ref{lem:ind3}, we get $C(X) = D_k$.

    Case 1: $D_k^- = D_k^+ \notin \{A^-, A^+\}$. Note that the 4-th extended quadrant centered by $C(X) = D_k$ only contains $D_k$. That is, $\hat{q}_4(C(X)) = \{D_k\}$, where $X_i\neq D_k$ for all $i\in N$. Note that we can always find a feasible interval $A$ in the 2-th quadrant. That is, $q_2(C(X))\neq \varnothing$. By Lemma~\ref{lem:ind14}, we can find a preference $\succsim\in\mathcal{P}$ and an interval $V\in q_2(C(X))$ such that $V\succ_i C(X)$ for all $i\in N$, which contradicts efficiency. 

    Case 2: $D_k^- < D_k^+$. It implies that $D_k^-\neq A^+$ and $D_k^+\neq A^-$, and thus $\hat{q}_2(C(X)) = \{[l, r]\in\calI\mid l\leq D_k^-, r\geq D_k^+\}$ does not contain $\{A^-\}$ and $\{A^+\}$. Therefore, $X_i\notin \hat{q}_2(C(X))$ for all $i\in N$. Since $A$ is a real interval and hence dense, we can always find an alternative $a\in A$ such that $D_k^- < a < D_k^+$. It implies $[a, a]\in q_4(C(X))$. By Lemma~\ref{lem:ind14}, we can find a preference $\succsim\in\mathcal{P}$ and an interval $V\in q_4(C(X))$, e.g., $[a, a]$, such that $V\succ_i C(X)$ for all $i\in N$, contradicting efficiency.     

    We conclude that each fixed ballot is either $\{A^-\}$ or $\{A^+\}$, and thus $C$ belongs to the rank-based scheme.
\end{proof}

\begin{lemma}\label{lem:ind15}
    For an odd $n$, the median scheme is the only efficient rank-based scheme. For an even $n$, no rank-based scheme is efficient.
\end{lemma}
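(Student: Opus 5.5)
The proof splits into two directions. For sufficiency (odd $n$), I would show that the median scheme $\medI(X)=\rankI_{(n+1)/2}(X)$ satisfies quad-centrality and then invoke Lemma~\ref{lem:ind17}. Let $m=(n+1)/2$ and $C(X)=[l_m,r_m]$. Define $L^{\le}=\{i: X_i^-\le l_m\}$, $L^{\ge}=\{i:X_i^-\ge l_m\}$, $R^{\le}=\{i:X_i^+\le r_m\}$ and $R^{\ge}=\{i:X_i^+\ge r_m\}$. Each of these sets has cardinality at least $m>n/2$, so any two of them intersect. Since $\hat{q}_t(C(X))$ is exactly the intersection of one $L$-set and one $R$-set, every extended quadrant contains some peak. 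Quad-centrality follows, and Lemma~\ref{lem:ind17} gives efficiency.

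For necessity, fix a rank-based scheme $\rankI_\gamma$ with $\gamma\neq (n+1)/2$; when $n$ is even this covers every $\gamma$. I would build a profile that leaves one extended quadrant empty and apply Lemma~\ref{lem:ind14}. Pick interior points $A^-<a<b<c<A^+$ and use peaks of two types. Take $n-\gamma+1$ agents with peak $[a,c]$ and $\gamma-1$ agents with peak $[b,b]$. The sorted left endpoints are then $a$ ($n-\gamma+1$ times) followed by $b$, so $l_\gamma=a$. The sorted right endpoints are $b$ ($\gamma-1$ times) followed by $c$, so $r_\gamma=c$. Hence $C(X)=[a,c]$, and both peak types lie in $\hat q$ of $[a,c]$, which fails. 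The profile has to be chosen so that the counts work out differently.

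I would instead use $k$ agents with peak $P=[a,b]$ and $n-k$ agents with peak $Q=[b,c]$. These two peaks lie in quadrants 3 and 1 of the candidate outcome. The left endpoints sort as $a$ ($k$ times) then $b$, and the right endpoints sort as $b$ ($k$ times) then $c$. If $\gamma\le k$, the outcome is $[a,b]=P$ itself, which is efficient. The other orientation needs a different pair of peaks. Taking $P=[a,c]$ and $Q=[b,b]$ with $k$ agents at $P$ gives left endpoints $a^{k},b^{n-k}$ and right endpoints $b^{n-k},c^{k}$. If $\gamma\le k$ and $\gamma>n-k$, the outcome is $[a,c]=P$. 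If $\gamma>k$ and $\gamma\le n-k$, the outcome is $[b,b]=Q$. If $\gamma\le k$ and $\gamma\le n-k$, the outcome is $[a,b]$. Then $P=[a,c]$ lies in quadrant 2 and $Q=[b,b]$ in quadrant 4, with $P$ on the axis boundary, so $\hat q_1([a,b])=\{[l,r]: l\ge a, r\ge b\}$ contains $P$. Similarly $\hat q_3$ contains $P$ and $Q$, $\hat q_2$ contains $P$, and $\hat q_4$ contains $Q$, so this case is efficient.

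The case that works is $\gamma>k$ and $\gamma>n-k$, which gives outcome $[b,c]$. Here $P=[a,c]$ satisfies $a<b$ and $c=c$, so it lies in $\hat q_2$ and $\hat q_3$. $Q=[b,b]$ satisfies $b=b$ and $b<c$, so it lies in $\hat q_3$ and $\hat q_4$. Thus $\hat q_1([b,c])$ contains no peak. Moreover $q_3([b,c])\ni[a,a]$ is nonempty, so Lemma~\ref{lem:ind14} yields a Pareto improvement. This case is feasible exactly when $n-\gamma+1\le k\le\gamma-1$, which requires $\gamma>(n+1)/2$; for example, take $k=\gamma-1$.

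For $\gamma<(n+1)/2$ I would apply the mirror construction. Take $k$ agents at $[a,c]$ and $n-k$ agents at $[b,b]$ with $\gamma\le k$ and $\gamma\le n-k$. That gave $[a,b]$, which was efficient, so a different profile is needed. Swapping roles, take $k$ agents at $[a,c]$ and $n-k$ agents at $[a,a]$ (or a reflected analogue), choosing the profile so that the outcome is $[a,b]$ with no peak in $\hat q_4$ and with $q_2$ nonempty. Alternatively, invoke the symmetry $x\mapsto A^-+A^+-x$: it maps intervals to intervals, swaps left and right endpoints, sends $\rankI_\gamma$ to $\rankI_{n+1-\gamma}$, and preserves efficiency. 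This reduces the case $\gamma<(n+1)/2$ to the case already handled. For even $n$, every $\gamma$ satisfies $\gamma\neq (n+1)/2$, so no rank-based scheme is efficient.

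The main obstacle is bookkeeping: choosing counts so that the $\gamma$-th order statistics land on the intended endpoints, and so that the empty extended quadrant's opposite quadrant is nonempty inside the feasible triangle. The reflection symmetry handles half of the cases.
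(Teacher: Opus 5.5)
Your argument is correct, and it departs from the paper's proof in one place. The sufficiency half is the paper's own. Each of your four index sets has at least $(n+1)/2$ members, so an $L$-set and an $R$-set must meet; this is exactly the count the paper uses to verify quad-centrality before invoking Lemma~\ref{lem:ind17}.

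For $\gamma>(n+1)/2$ you use peaks $[a,c]$ and $[b,b]$, get outcome $[b,c]$ with no peak in $\hat{q}_1$, and take the improvement $[a,a]\in q_3$. This is the same wide-interval-plus-interior-point device as the paper's Case 2, which uses $A$ and $\{a\}$ and so needs only one interior point.

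The genuine difference is how you handle $\gamma<(n+1)/2$. You use the reflection $\phi([l,r])=[A^-+A^+-r,\,A^-+A^+-l]$. Because only $\gamma=(n+1)/2$ is exceptional, you also treat even and odd $n$ together. The paper instead writes out four near-identical explicit constructions (Cases 1, 2, 1$'$, 2$'$). Your route halves the casework, but you should state why the reflection preserves efficiency:
\begin{itemize}
\item $\phi$ maps $B(X,Y)$ onto $B(\phi(X),\phi(Y))$, so $\succsim\mapsto\succsim^\phi$ maps $\mathcal{P}$ onto itself;
\item $\rankI_\gamma\circ\phi=\phi\circ\rankI_{n+1-\gamma}$.
\end{itemize}
The paper's explicit route needs no such symmetry check.

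There is one slip, in a branch you discarded. When $\gamma\le k$ and $\gamma\le n-k$, the scheme is \emph{not} efficient at the outcome $[a,b]$. The set $\hat{q}_3([a,b])=\{[l,r]\in\calI: l\le a,\ r\le b\}$ contains neither $[a,c]$ (since $c>b$) nor $[b,b]$ (since $b>a$), while $[b,c]\in q_1([a,b])$. So Lemma~\ref{lem:ind14} applies there directly. Taking $k=n-\gamma$, this is precisely the paper's Case 1 for $\gamma<(n+1)/2$, so you did not need the reflection at all. You should also drop your tentative alternative with peaks $[a,c]$ and $[a,a]$. There all left endpoints equal $a$, so the outcome can only be $[a,a]$ or $[a,c]$, never $[a,b]$.
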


\begin{proof}
    For $n=1$, the only rank-based scheme outputs the median scheme. 

    Fix an odd $n\geq 3$. Suppose $n = 2m+1$. Note that we can always find an alternative $a\in A$ such that $A^- < a < A^+$. 

    We show that the $\gamma$-th rank-based scheme violates efficiency if $\gamma \leq m$ or $\gamma\geq m+2$.

    Case 1: $\gamma\leq m$. Let $X_i = A$ for all $i = 1, \cdots, n-\gamma$, and $X_j = \{a\}$ for all $j = n-\gamma+1, \cdots, n$. For the leftmost alternatives, there are $n-\gamma$ copies of $A^-$ and $\gamma$ copies of $a$; for the rightmost alternatives, there are $\gamma$ copies of $a$ and $n-\gamma$ copies of $A^+$. Note that $n-\gamma\geq \gamma$.
    The $\gamma$-th rank-based scheme gives that $C(X) = [A^-, a]$. Note that $\hat{q}_3(C(X)) = \{[l, r]\in\calI\mid l\leq A^-, r\leq a\}$, and $X_i \notin \hat{q}_3(C(X))$ for all $i\in N$. Also notice that $[a, A^+]$ is feasible. From Lemma~\ref{lem:ind14}, we can find a preference profile $\succsim\in\mathcal{P}^N$ and an interval $V\in q_1(C(X))$, e.g., $[a, A^+]$, such that $V\succ_i C(X)$ for all $i\in N$.

    Case 2: $\gamma\geq m+2$. Similarly, let $X_i = A$ for all $i = 1, \cdots, n-\gamma+1$, and $X_j = \{a\}$ for all $j = n-\gamma+2, \cdots, n$. We get the $\gamma$-th rank-based scheme yields $C(X) = [a, A^+]$. Notice that for all $i\in N$, there is $X_i\notin\hat{q}_1(C(X)) = \{[l, r]\in\calI\mid l\geq a, r\geq A^+\}$, and $[A^-, a]\in q_3(C(X))$ is always feasible. By Lemma~\ref{lem:ind14}, we can find a preference profile $\succsim\in\mathcal{P}^N$ and an interval $V\in q_3(C(X))$, e.g., $[A^-, a]$, such that $V\succ_i C(X)$ for all $i\in N$.

    We show that the median scheme, i.e., the $(m+1)$-th rank-based scheme, is efficient by verifying quad-centrality. That is to show for all $t$, there exists $X_i\in \hat{q}_t(C(X))$ for some $i\in N$. 
    Suppose to the contrary, without loss of generality, there is $X_i\notin \hat{q}_3(C(X))$ for all $i\in N$. That is, $X_i^- > C^-(X)$ or $X_i^+ > C^+(X)$ for all $i\in N$. It implies $|\{i\in N\mid X_i^- > C^-(X)\}| + |\{i\in N\mid X_i^+ > C^+(X)\}|\geq n$. Note that $C$ is a median scheme. Then we have $|\{i\in N\mid X_i^- > C^-(X)\}| \leq m$ and $|\{i\in N\mid X_i^+ > C^+(X)\}| \leq m$. It implies that $n\leq |\{i\in N\mid X_i^- > C^-(X)\}| + |\{i\in N\mid X_i^+ > C^+(X)\}| \leq 2m < n$, a contradiction. Therefore $C$ satisfies quad-centrality, and by Lemma~\ref{lem:ind17}, it is efficient.

    Therefore, for an odd $n$, the median scheme is the only efficient rank-based scheme.

    For each even $n\geq 2$. Suppose $n = 2m$. We consider similar constructions. 

    Case 1': $\gamma\leq m$. Let $X_i = A$ for all $i = 1, \cdots, n-\gamma$, and $X_j = \{a\}$ for all $j = n-\gamma+1, \cdots, n$. We obtain $C(X) = [A^-, a]$. From Lemma~\ref{lem:ind14}, we can find a preference profile $\succsim\in\mathcal{P}^N$ and an interval $V=[a, A^+]$, such that $V\succ_i C(X)$ for all $i\in N$.

    Case 2': $\gamma\geq m+1$. Consider $X_i = A$ for all $i = 1, \cdots, n-\gamma+1$, and $X_j = \{a\}$ for all $j = n-\gamma+2, \cdots, n$. We get $C(X) = [a, A^+]$. By Lemma~\ref{lem:ind14}, we can find a preference profile $\succsim\in\mathcal{P}^N$ and an interval $V\in q_3(C(X))$, e.g., $[A^-, a]$, such that $V\succ_i C(X)$ for all $i\in N$.

    Therefore all rank-based schemes for an even $n$ violate efficiency.
\end{proof}

The preceding steps are sufficient to show the characterization of the median scheme. 

\begin{theorem}\label{thm:ind3}
    For an odd $n$, the median scheme is the only efficient, anonymous, and strategy-proof voting scheme. 
\end{theorem}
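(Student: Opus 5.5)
The proof assembles results already established in this section, so I would give it in two directions.

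For the \emph{sufficiency} direction, fix an odd $n$ and take the median scheme $\medI(X)=\rankI_{\frac{n+1}{2}}(X)$. It is a rank-based scheme, hence a restricted generalized median scheme, hence an element of $S_n$. Theorem~\ref{thm:ind1} then gives anonymity and strategy-proofness. Efficiency is exactly the odd case of Lemma~\ref{lem:ind15}, which was verified there through quad-centrality and Lemma~\ref{lem:ind17}. So this direction needs no new argument beyond citing these results.

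For the \emph{necessity} direction, let $C:\calI^N\to\calI$ be efficient, anonymous and strategy-proof. Lemma~\ref{lem:ind16} shows that $C$ is a rank-based scheme, $C=\rankI_\gamma$ for some $\gamma$. Recall how that lemma gets there. Efficiency gives unanimity by Lemma~\ref{lem:ind12}. Theorem~\ref{thm:ind4} then reduces $C$ to a restricted generalized median scheme. Efficiency, through Lemma~\ref{lem:ind14}, finally forces every fixed ballot to be $\{A^-\}$ or $\{A^+\}$. Applying Lemma~\ref{lem:ind15} to this efficient rank-based scheme with $n$ odd gives $\gamma=(n+1)/2$, so $C=\medI$.

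The one point to watch is that "rank-based" must be read in the sense used by Lemma~\ref{lem:ind15}. In Lemma~\ref{lem:ind16} the $n-1$ fixed ballots are each $\{A^-\}$ or $\{A^+\}$. With $\gamma$ of them equal to $\{A^+\}$ and $n-1-\gamma$ equal to $\{A^-\}$, Lemma~\ref{lem:ind3} (equivalently, a direct count on each endpoint) shows the median is $\rankI_{\gamma+1}(X)$, so the induced rank parameter runs over $1,\dots,n$. Once this correspondence is checked, the two lemmas compose directly.
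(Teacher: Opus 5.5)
Your proposal is correct and matches the paper's proof essentially step for step. Necessity follows from Lemma~\ref{lem:ind16} (which rests on Lemma~\ref{lem:ind12}, Theorem~\ref{thm:ind4} and Lemma~\ref{lem:ind14}) and then Lemma~\ref{lem:ind15}, while sufficiency follows from the median scheme lying in $S_n$ (so Theorem~\ref{thm:ind1} gives anonymity and strategy-proofness) plus the odd case of Lemma~\ref{lem:ind15}; your side remark that $\gamma$ copies of $\{A^+\}$ among the $n-1$ fixed ballots yield $\rankI_{\gamma+1}$ is correct bookkeeping the paper leaves implicit.
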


\begin{proof}
    Fix an odd $n$. Suppose the voting scheme $C:\calI^n\to\calI$ satisfies efficiency, anonymity, and strategy-proofness. 
    By Lemma~\ref{lem:ind12}, efficiency implies unanimity. From Theorem~\ref{thm:ind4}, unanimity, anonymity, and strategy-proofness imply that $C$ is a restricted generalized median scheme. 
    From Lemma~\ref{lem:ind16}, efficiency, anonymity, and strategy-proofness imply that $C$ is a rank-based scheme.
    From Lemma~\ref{lem:ind15}, we know $C$ must be a median scheme.
    
    Conversely, Lemma~\ref{lem:ind15} shows that the median scheme is the only efficient rank-based scheme. Note that each rank-based scheme is a generalized median scheme, which satisfies anonymity and strategy-proofness. This verifies efficiency, anonymity, and strategy-proofness of the median scheme.
\end{proof}

\section{Concluding remarks}\label{sec:ind5}

This paperstudies voting schemes, and characterizes several classes of strategy-proof voting schemes under single-peaked preferences over intervals. The leftright scheme, which generalizes the minmax scheme \citep{moulin1980}, is the only strategy-proof voting scheme (Theorem~\ref{thm:ind5}). All schemes considered in this paper belong to the leftright scheme. Adding anonymity, it becomes the generalized median scheme (Theorem~\ref{thm:ind1}); and this scheme is restricted further by adding unanimity or ontoness (Theorem~\ref{thm:ind4}). The results for the (restricted) generalized median schemes are used for the characterization of the median scheme. We find that the median scheme is the only efficient, anonymous, and strategy-proof voting scheme (Theorem~\ref{thm:ind3}). Recall that three axioms characterize the restricted generalized median scheme for one-dimensional model \citep{moulin1980}. Our result reveals the difficulty of preserving efficiency and strategy-proofness in two-dimensional social choice. 

In addition to the analysis above, we provide several discussions. 

\subsection{Alternative set}
Similar to \cite{klaus2020}, the alternative set in our paper is a closed interval on $\mathbb{R}$. In contrast, \cite*{barbera1993} consider a discrete setting, i.e., integer interval, which is different from ours. We claim that most of our results still hold in the discrete model, while Lemma~\ref{lem:ind16}, and thus Theorem~\ref{thm:ind5}, may fail. 

Lemma~\ref{lem:ind16} works only when the alternative set $A$ is dense. An alternative set $A$ is dense if for all distinct $a,b\in A$, there exists $c\in A$ such that $a < c < b$. Note that the alternative set as a closed interval is dense.

Recall the case 2 of the proof. Whenever a non-extreme fixed ballot $D_k$ is chosen, we can always find an alternative $c\in A$ such that $D_k^- < a < D_k^+$, construct a Pareto improvement to $[a, a]$, and then eliminate it. In contrast, suppose the alternative set $A$ is an integer interval with at least 3 alternatives; and consider a non-extreme fixed ballot $D_k = [A^+-1, A^+]$. We do not have any feasible integer interval in the 4-th quadrant. Thus the improvement cannot be constructed. We illustrate this comparison in Figure~\ref{fig:ind4}.

\begin{figure}
    \centering
\begin{subfigure}[t]{0.45\textwidth}
\centering
\begin{tikzpicture}[scale=1] 
    \draw[dashed] (-2.8,0)--(2,0)  node[below] {$l$}; 
    \draw[dashed] (0,-3) -- (0,1.2) node[above] {$r$}; 

    \fill (0,0) circle (2pt); \node[above left] at (0,0) {$D_k$}; 
    \fill (1,0) circle (2pt); \node[above right] at (1,0) {$X_1=\{A^+\}$}; 
    \fill (0,-1) circle (2pt); \node[left] at (0,-1) {$\{A^+-1\}$}; 

    \draw[dashed] (1, 0) -- (-2, -3) ;
    \fill (-2, -3) circle (2pt); \node[left] at (-2, -3) {$X_2=\{A^-\}$}; 
    \draw[dashed] (-2, -3) -- (-2, 0) ;
    \fill (-2, 0) circle (2pt); \node[above left] at (-2, 0) {$A$}; 
    
    \node at (-1.8, 1.2) {$\forall i: X_i\notin \hat{q}_2(D_k)$}; 
    \node at (1.8, -1.2) {$q_4(D_k)=\varnothing$}; 
\end{tikzpicture}
\subcaption{discrete alternative set}
\end{subfigure}
\begin{subfigure}[t]{0.45\textwidth}
\centering
\begin{tikzpicture}[scale=1] 
    \draw[dashed] (-2.8,0)--(2,0)  node[below] {$l$}; 
    \draw[dashed] (0,-3) -- (0,1.2) node[above] {$r$}; 

    \fill (0,0) circle (2pt); \node[above left] at (0,0) {$D_k$}; 
    \fill (1,0) circle (2pt); \node[above right] at (1,0) {$X_1=\{A^+\}$}; 
    \fill (0,-1) circle (2pt); \node[left] at (0,-1) {$\{A^+-1\}$}; 

    \draw[-] (1, 0) -- (-2, -3) ;
    \fill (-2, -3) circle (2pt); \node[left] at (-2, -3) {$X_2=\{A^-\}$}; 
    \draw[-] (-2, -3) -- (-2, 0) ;
    \fill (-2, 0) circle (2pt); \node[above left] at (-2, 0) {$A$}; 
    \draw[-] (-2, 0) -- (1,0) ;

    \fill (0.5, -0.5) circle (2pt); \node[right] at (0.5, -0.5) {$[a,a]$}; 
    
    \node at (-1.8, 1.2) {$\forall i: X_i\notin \hat{q}_2(D_k)$}; 
    \node at (1.8, -1.2) {$q_4(D_k)\neq\varnothing$}; 
\end{tikzpicture}
\subcaption{continuous alternative set}
\end{subfigure}
    \caption{Lemma~\ref{lem:ind16} fails in the discrete model.}
    \label{fig:ind4}
\end{figure}

\subsection{Social choice function}

In this paper, we simplify our analysis into the voting schemes, i.e., peak-only social choice functions. Due to this simplification, each agent only needs to report peak interval instead of full preference. As explained before, this simplification is natural and practical. But the cost is that we actually assume an additional axiom, namely peak-onlyness, of the social choice function. 

Fix $N$. A social choice function is a mapping $(\succsim_1, \cdots, \succsim_n)\mapsto F(\succsim_1, \cdots, \succsim_n)$ from the agents' preference profile, rather than their peak profile, to a choice of interval. Let $P(\succsim_i)$ denote the peak interval of preference $\succsim_i$ of agent $i$. The peak-onlyness is defined as follows:

\begin{axiom}[Peak-onlyness]
    Fix $N$. For all $\succsim, \succsim'\in\mathcal{P}^N$ such that $P(\succsim_i) = P(\succsim_i')$ for all $i\in N$, we have $F(\succsim) = F(\succsim')$.
\end{axiom}

This axiom says that the social choice depends only on the peaks of agents' preferences. 

Note that our set of feasible intervals, a triangle on $\mathbb{R}^2$, is non-Cartesian and compact, and the preference is assumed to be single-peaked. Adding either unanimity or ontoness, the range of the social choice function would coincide with the set of feasible intervals. From \cite*{barbera1998}, we know such social choice function must take the form of generalized median voter scheme, and thus peak-onlyness is implied. 

Since efficiency implies unanimity, we can strengthen our Theorem~\ref{thm:ind3} as follows.

\begin{theorem*}[\textbf{4'}]
    For an odd $n$, the median scheme is the only efficient, anonymous, and strategy-proof social choice function.
\end{theorem*}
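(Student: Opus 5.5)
Theorem 4' is Theorem~\ref{thm:ind3} with peak-onlyness dropped, so the plan is to show that efficiency, anonymity and strategy-proofness already force peak-onlyness; everything after that is Theorem~\ref{thm:ind3}.

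\emph{Step 1 (median scheme satisfies the axioms).} Read the median scheme as a social choice function through $F(\succsim)=\medI(P(\succsim_1),\dots,P(\succsim_n))$. It is anonymous because $\medI$ is symmetric in its arguments. It is efficient because quad-centrality holds profile by profile, as in the proof of Lemma~\ref{lem:ind15}, and Lemma~\ref{lem:ind17} applies. It is strategy-proof because an agent who misreports a whole preference only changes her reported peak, and Theorem~\ref{thm:ind1} shows no reported peak $Y_i$ gives her a better outcome under any $\succsim_i\in\mathcal{P}$.

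\emph{Step 2 (converse: reduce to the peak-only case).} Let $F$ be an efficient, anonymous and strategy-proof social choice function on $\mathcal{P}^N$.
\begin{enumerate}[label=(\roman*)]
    \item Unanimity: the proof of Lemma~\ref{lem:ind12} never uses peak-onlyness, since if every agent has peak $V$ and $F(\succsim)\neq V$, then $V$ Pareto dominates $F(\succsim)$. Hence $F$ is onto $\calI$.
    \item Peak-onlyness: the feasible set $\calI$ is a compact, non-Cartesian triangle in $\mathbb{R}^2$, and the domain $\mathcal{P}$ is the generalized single-peaked domain of \cite*{barbera1993}. By the result of \cite*{barbera1998}, an onto strategy-proof social choice function on this domain is a generalized median voter scheme, and such schemes depend only on the profile of peaks. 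So there is $C:\calI^N\to\calI$ with $F(\succsim)=C(P(\succsim_1),\dots,P(\succsim_n))$.
    \item Transfer of axioms to $C$. Anonymity passes to $C$ directly. Strategy-proofness of $C$, in the sense of the paper's axiom, follows by letting the deviating report be any preference with peak $Y_i$. Efficiency of $C$ also follows, because the paper's efficiency axiom quantifies over all preference profiles with peaks $X$, and $C(X)=F(\succsim)$ for each of them.
\end{enumerate}
Theorem~\ref{thm:ind3} then gives $C=\medI$, which proves the statement.

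\emph{Main obstacle.} The hardest step is (ii): checking that the hypotheses of \cite*{barbera1998} match this setting. Their domain must be contained in ours and rich enough, and their range condition must hold with range exactly $\calI$, which it does by ontoness. If the citation does not fit directly, I would prove peak-onlyness by hand: fix $X_{-i}$ and two preferences $\succsim_i,\succsim_i'$ with the same peak $X_i$. Applying strategy-proofness in both directions together with the richness condition, as in Lemma~\ref{lem:ind1}, gives
\[
F(\succsim_i,\succsim_{-i})\in B\bigl(X_i,F(\succsim_i',\succsim_{-i})\bigr)
\quad\text{and}\quad
F(\succsim_i',\succsim_{-i})\in B\bigl(X_i,F(\succsim_i,\succsim_{-i})\bigr).
\]
This richness step is itself delicate: the richness condition only produces some preference $\succsim\in\mathcal{P}$ with peak $X_i$, not necessarily $\succsim_i$ or $\succsim_i'$. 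So the argument may instead need to compare both $\succsim_i$ and $\succsim_i'$ against that constructed $\succsim$, and it is not clear this goes through. If both memberships do hold, they force the two outcomes to be equal, using the same coordinate-wise contradiction as in Lemma~\ref{lem:ind10}. Changing one agent at a time then yields peak-onlyness.
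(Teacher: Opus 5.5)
Your main route is the paper's own: efficiency gives unanimity (the argument of Lemma~\ref{lem:ind12} never uses peak-onlyness) and hence range $\calI$; \cite*{barbera1998} then yields a generalized median voter scheme and so peak-onlyness; and Theorem~\ref{thm:ind3} finishes. Your explicit transfer of the three axioms to the induced voting scheme, and your check that the median scheme satisfies them as a social choice function, fill in details the paper leaves implicit. The domain-matching caveat you raise about that citation applies equally to the paper, which invokes it in exactly this way without further checking, and your hand-made fallback is not needed for this route (as you suspect, it does not go through as written, because the Lemma~\ref{lem:ind1}-style argument only gives a contradiction if the preference supplied by richness is the one actually reported, which is peak-onlyness itself).
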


\subsection{Group strategy-proofness}

Group strategy-proofness is a stronger notion of strategy-proofness. 
It says that no coalition of agents has an incentive to collectively misreport their peaks. We establish this axiom as follows:

\begin{axiom}[Group strategy-proofness]
    Fix $N$. For all $\succsim\in\mathcal{P}^N$ with peaks $X\in\calI^N$ and all $S\subseteq N$, there is no $Y_S\in\calI^S$ such that for all $i\in S$, $C(Y_S, X_{N\setminus S})\succ_i C(X)$.
\end{axiom}

In the one-dimensional model, \cite{moulin1980} shows that group strategy-proofness coincides with strategy-proofness given efficiency and anonymity. However, in our interval model, this result doesn't hold. From Theorem~\ref{thm:ind3}, we know that only the median scheme satisfies efficiency, anonymity, and strategy-proofness. To verify whether group strategy-proofness coincides with strategy-proofness, we only need to check whether the median scheme for an odd number of agents satisfies group strategy-proofness. Indeed, no voting scheme satisfies all three axioms. 
We provide a simple example to illustrate. 

\begin{example}
    The median scheme fails group strategy-proofness.

    Suppose there are three agents. Their true peaks are $X_1 = [1, 2]$, $X_2 = [2, 3]$, $X_3 = [1, 3]$. The median scheme gives $\medI(X_1, X_2, X_3) = [1, 3]$.
    Consider a deviation $Y_1 = Y_2 = [2, 2]$ of coalition $\{1, 2\}$. Then the manipulated median choice becomes $\medI(Y_1, Y_2, X_3) = [2, 2]$. 

    Suppose agent 1 has a utility function $u_1(V) = - |V^--1| - 2|V^+ - 2|$; and agent 2 has a utility function $u_2(V) = - 2|V^- - 2| - |V^+ - 3|$. Then we have $u_1([1, 3]) = -2 < -1 = u_1([2, 2])$; and $u_2([1, 3]) = -2 < -1 = u_2([2, 2])$. That is, agents in coalition $\{1, 2\}$ strictly benefit from misreporting. This violates group strategy-proofness. 
\end{example}

\begin{proposition}
    No voting scheme satisfies efficiency, anonymity, and group strategy-proofness.
\end{proposition}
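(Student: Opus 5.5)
Group strategy-proofness implies strategy-proofness: take the coalition $S=\{i\}$, and the axiom reduces to the statement that no single agent strictly gains by misreporting, i.e.\ $C(X)\succsim_i C(Y_i,X_{-i})$ for all $Y_i$. So any scheme satisfying efficiency, anonymity, and group strategy-proofness also satisfies efficiency, anonymity, and strategy-proofness. I then split on the parity of $n$ and invoke the existing characterizations.

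\emph{Odd $n$.} By Theorem~\ref{thm:ind3}, the scheme must be the median scheme $\medI$. It then suffices to exhibit one preference profile and one coalitional deviation that violates group strategy-proofness for $\medI$ with $n$ agents, $n\ge 3$, extending the three-agent example above. I would take the real numbers $1<2<3$ inside $A$ (by an affine relabelling of $A$, any three interior-ordered points work, since $A$ is non-singleton), and choose the profile
\[
X_1=[1,2],\quad X_2=[2,3],\quad X_3=\cdots=X_n=[1,3] \text{ for half of the rest}, \dots
\]
More carefully, I would choose $m=(n-1)/2$ agents with peak $[1,2]$, $m$ agents with peak $[2,3]$, and one agent with peak $[1,3]$. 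Sorting left endpoints gives $m+1$ ones and $m$ twos, so the median left endpoint is $1$. Sorting right endpoints gives $m$ twos and $m+1$ threes, so the median right endpoint is $3$. Hence $\medI(X)=[1,3]$.

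Now let the coalition of all $2m$ agents with peaks $[1,2]$ or $[2,3]$ report $[2,2]$. The left endpoints become $2m$ twos and one $1$, with median $2$; the right endpoints become $2m$ twos and one $3$, with median $2$. The outcome is $[2,2]$. I would use the utilities from the example above, $u(V)=-|V^--1|-2|V^+-2|$ for the $[1,2]$-agents and $u(V)=-2|V^--2|-|V^+-3|$ for the $[2,3]$-agents. Each coalition member moves from utility $-2$ to $-1$, so every member strictly gains. These utilities lie in $\mathcal{P}$ because they are weighted $L_1$ disutilities, which are single-peaked in the betweenness sense: moving the left or right endpoint toward the peak strictly improves utility whenever the interval changes. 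For $n=1$, the median scheme is dictatorial and group manipulation is impossible; I will come back to this case below.

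\emph{Even $n$.} Theorem~\ref{thm:ind3} does not apply, but its proof chain still does. Lemma~\ref{lem:ind12} gives unanimity, Theorem~\ref{thm:ind4} gives a restricted generalized median scheme, Lemma~\ref{lem:ind16} gives a rank-based scheme, and Lemma~\ref{lem:ind15} shows that no rank-based scheme is efficient for even $n$. So no scheme satisfies the axioms when $n$ is even.

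The main obstacle is $n=1$. There the median (dictatorial) scheme is efficient, anonymous, and trivially group strategy-proof, because the only coalition is $\{1\}$ and the agent already receives its peak. So the proposition as stated must implicitly assume $n\ge 2$, and I would state this restriction explicitly. Apart from that, the only real work is checking the odd-$n$ counting of medians and that the two utility functions belong to $\mathcal{P}$.
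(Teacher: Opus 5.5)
Your proof is correct and follows the paper's route. Group strategy-proofness implies strategy-proofness, Theorem~\ref{thm:ind3} then forces the median scheme, and the coalitional deviation to $[2,2]$ under the paper's weighted $L_1$ utilities breaks it. Beyond that, you make explicit what the paper leaves implicit: the extension to every odd $n\ge 3$ (your padded profile and median counts check out), the even-$n$ case via Lemma~\ref{lem:ind12}, Theorem~\ref{thm:ind4}, Lemma~\ref{lem:ind16} and Lemma~\ref{lem:ind15}, and the fact that the statement fails for $n=1$, where the dictatorial scheme satisfies all three axioms.
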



\bibliography{main}

@article{barbera1991,
  title={Voting by committees},
  author={Barber{\`a}, Salvador and Sonnenschein, Hugo and Zhou, Lin},
  journal={Econometrica},
  volume={59},
  number={3},
  pages={595--609},
  year={1991},
  publisher={JSTOR}
}

@article{barbera1993,
  title={Generalized median voter schemes and committees},
  author={Barber{\`a}, Salvador and Gul, Faruk and Stacchetti, Ennio},
  journal={Journal of Economic Theory},
  volume={61},
  number={2},
  pages={262--289},
  year={1993},
  publisher={Elsevier}
}

@article{barbera1997,
  title={Voting under constraints},
  author={Barber{\`a}, Salvador and Mass{\'o}, Jordi and Neme, Alejandro},
  journal={Journal of Economic Theory},
  volume={76},
  number={2},
  pages={298--321},
  year={1997},
  publisher={Elsevier}
}

@article{barbera1998,
  title={Strategy-proof voting on compact ranges},
  author={Barber{\`a}, Salvador and Mass{\'o}, Jordi and Serizawa, Shigehiro},
  journal={Games and Economic Behavior},
  volume={25},
  number={2},
  pages={272--291},
  year={1998},
  publisher={Elsevier}
}

@article{barbera2001,
  title={Strategy-proof social choice correspondences},
  author={Barber{\`a}, Salvador and Dutta, Bhaskar and Sen, Arunava},
  journal={Journal of Economic Theory},
  volume={101},
  number={2},
  pages={374--394},
  year={2001},
  publisher={Elsevier}
}

@article{barbera2011,
  title={Strategyproof social choice},
  author={Barber{\`a}, Salvador},
  journal={Handbook of Social Choice and Welfare},
  volume={2},
  pages={731--831},
  year={2011},
  publisher={Elsevier}
}

@article{block2010,
  title={Efficient and strategy-proof voting over connected coalitions: A possibility result},
  author={Block, Veronica},
  journal={Economics Letters},
  volume={108},
  number={1},
  pages={1--3},
  year={2010},
  publisher={Elsevier}
}

@article{border1983,
  title={Straightforward elections, unanimity and phantom voters},
  author={Border, Kim C and Jordan, James S},
  journal={The Review of Economic Studies},
  volume={50},
  number={1},
  pages={153--170},
  year={1983},
  publisher={Wiley-Blackwell}
}

@article{clemen1989,
  title={Combining forecasts: A review and annotated bibliography},
  author={Clemen, Robert T},
  journal={International Journal of Forecasting},
  volume={5},
  number={4},
  pages={559--583},
  year={1989},
  publisher={Elsevier}
}

@article{farfel2011,
  title={Aggregating value ranges: Preference elicitation and truthfulness},
  author={Farfel, Joseph and Conitzer, Vincent},
  journal={Autonomous Agents and Multi-Agent Systems},
  volume={22},
  number={1},
  pages={127--150},
  year={2011},
  publisher={Springer}
}

@article{gibbard1973,
  title={Manipulation of voting schemes: A general result},
  author={Gibbard, Allan},
  journal={Econometrica},
  volume = {41},
  number = {4},
  pages={587--601},
  year={1973},
  publisher={JSTOR}
}

@article{kim1984,
  title={Nonmanipulability in two dimensions},
  author={Kim, Ki Hang and Roush, Fred W},
  journal={Mathematical Social Sciences},
  volume={8},
  number={1},
  pages={29--43},
  year={1984},
  publisher={Elsevier}
}

@article{klaus2020,
  title={On strategy-proofness and single-peakedness: median-voting over intervals},
  author={Klaus, Bettina and Protopapas, Panos},
  journal={International Journal of Game Theory},
  volume={49},
  number={4},
  pages={1059--1080},
  year={2020},
  publisher={Springer}
}

@unpublished{miller2025,
  author = {Miller, Alan D},
  title  = {The limits of tolerance},
  year   = {2025},
  note   = {{mimeo}. \href{https://doi.org/10.48550/arXiv.2501.00578}{arXiv:2501.00578}}
}

@article{moulin1980,
  title={On strategy-proofness and single peakedness},
  author={Moulin, Herv{\'e}},
  journal={Public Choice},
  volume={35},
  number={4},
  pages={437--455},
  year={1980},
  publisher={Springer}
}

@article{nehring2007,
  title={The structure of strategy-proof social choice—Part I: General characterization and possibility results on median spaces},
  author={Nehring, Klaus and Puppe, Clemens},
  journal={Journal of Economic Theory},
  volume={135},
  number={1},
  pages={269--305},
  year={2007},
  publisher={Elsevier}
}

@article{nehring2007b,
  title={Efficient and strategy-proof voting rules: A characterization},
  author={Nehring, Klaus and Puppe, Clemens},
  journal={Games and Economic Behavior},
  volume={59},
  number={1},
  pages={132--153},
  year={2007},
  publisher={Elsevier}
}

@article{peters1992,
  title={Pareto optimality, anonymity, and strategy-proofness in location problems},
  author={Peters, Hans and van der Stel, Hans and Storcken, Ton},
  journal={International Journal of Game Theory},
  volume={21},
  number={3},
  pages={221--235},
  year={1992},
  publisher={Springer}
}

@article{sprumont1995,
  title={Strategyproof collective choice in economic and political environments},
  author={Sprumont, Yves},
  journal={Canadian Journal of Economics},
  volume={28},
  number={1},
  pages={68--107},
  year={1995},
  publisher={JSTOR}
}

@article{satterthwaite1975,
  title={Strategy-proofness and Arrow's conditions: Existence and correspondence theorems for voting procedures and social welfare functions},
  author={Satterthwaite, Mark Allen},
  journal={Journal of Economic Theory},
  volume={10},
  number={2},
  pages={187--217},
  year={1975},
  publisher={Elsevier}
}
\end{document}